\newtheorem{theorem}{Theorem}
\newtheorem{lemma}{Lemma}
\newtheorem{corollary}{Corollary}
\def\ScaleIfNeeded{%
\ifdim\Gin@nat@width>\linewidth \linewidth \else \Gin@nat@width
\fi } \makeatother
\begin{document}

\title{Pinching-Antenna Systems-Enabled Multi-User Communications: Transmission Structures and Beamforming Optimization}
\author{Jingjing Zhao, Haowen Song, Xidong Mu, Kaiquan Cai, Yanbo Zhu, and Yuanwei Liu,~\IEEEmembership{Fellow,~IEEE}
\thanks{J. Zhao, H. Song, K. Cai, Y. Zhu are with the School of Electronics and Information Engineering, Beihang University, 100191, Beijing, China, and also with the State Key Laboratory of CNS/ATM, 100191, Beijing, China. (e-mail:\{jingjingzhao, haowensong, ckq, zyb\}@buaa.edu.cn). 

X. Mu is with the Centre for Wireless Innovation (CWI), Queen's University Belfast, Belfast, BT3 9DT, U.K. (e-mail: x.mu@qub.ac.uk). 

Y. Liu is with the Department of Electrical and Electronic Engineering, the University of Hong Kong, Hong Kong, China (e-mail: yuanwei@hku.hk). }}
\maketitle
\begin{abstract}
Pinching-antenna systems (PASS) represent an innovative advancement in flexible-antenna technologies, aimed at significantly improving wireless communications by ensuring reliable line-of-sight connections and dynamic antenna array reconfigurations. To employ multi-waveguide PASS in multi-user communications, three
practical transmission structures are proposed, namely waveguide multiplexing (WM), waveguide division (WD), and waveguide switching (WS). Based on the proposed structures, the joint baseband signal processing and pinching beamforming design is studied for a general multi-group multicast communication system, with the unicast communication encompassed as a special case. A max-min fairness (MMF) problem is formulated for each proposed transmission structure, subject to the maximum transmit power constraint. For WM, to solve the highly-coupled and non-convex MMF problem with complex exponential and fractional expressions, a penalty dual decomposition (PDD)-based algorithm is invoked for obtaining locally optimal solutions. Specifically, the augmented Lagrangian relaxation is first applied to alleviate the stringent coupling constraints, which is followed by the block decomposition over the resulting augmented Lagrangian function. Then, the proposed PDD-based algorithm is extended to solve the MMF problem for both WD and WS. Furthermore, a low-complexity algorithm is proposed for the unicast case employing the WS structure, by simultaneously aligning the signal phases and minimizing the large-scale path loss at each user. 
Finally, numerical results reveal that: 1) the MMF performance is significantly improved by employing the PASS compared to conventional fixed-position antenna systems; 2) WS and WM are suitable for unicast and multicast communications, respectively; 3) the performance gap between WD and WM can be significantly alleviated when the users are geographically isolated.
\end{abstract}
\begin{IEEEkeywords}
Pinching-antenna systems, pinching beamforming, transmission structures, unicast and multicast communications.
\end{IEEEkeywords}
\section{Introduction}
The proliferation of data-intensive applications and the growing demand for ubiquitous high-speed connectivity have placed unprecedented pressure on the physical layer design of wireless communication systems. Among the various innovations in the air-interface design, multiple-input multiple-output (MIMO)~\cite{6736761, 11159291} stands out as a pivotal technology that has shaped the system design over the past decades. By employing antenna arrays at the transmitter and receiver, MIMO systems are capable of boosting signal quality via directional transmission, alleviating the effects of multi-path propagation, and enabling the concurrent transmission of multiple data streams within the same frequency band~\cite{6375940}. However, the fixed and discrete antenna deployment restricts the diversity and spatial multiplexing gain of conventional MIMO systems, as the channel variation in the continuous spatial field is not fully exploited. 

To address the demanding capacity needs of upcoming sixth-generation (6G) networks, flexible and reconfigurable antenna technologies have gained considerable focus as an advanced evolution of MIMO. Among the most notable advances are reconfigurable intelligent surfaces (RISs)~\cite{marco,8910627}, dynamic metasurface antennas~\cite{9324910}, holographic MIMO surfaces~\cite{9136592}, and recently proposed movable/fluid antenna systems~\cite{9264694,10318061}. While these technologies have demonstrated spectrum efficiency gains by dynamically adjusting the propagation environment or antenna geometry, they remain fundamentally constrained by issues, such as large-scale propagation impairments, i.e., free-space path loss and line-of-sight (LoS) blockage, and limited flexibility in scaling up/down the number of antennas.
To overcome these challenges, the pinching-antenna system (PASS) has emerged as a novel antenna architecture that offers a new degree of reconfigurability and spatial adaptability~\cite{ding,11169486}. Initially prototyped by NTT DOCOMO~\cite{docomo}, PASS employs dielectric waveguides with low propagation loss as the transmission medium and utilizes pinching antennas (PAs)—small dielectric particles attached along the waveguide—as radiating elements. Through precise control of the activation positions of PAs, PASS enables the pinching beamforming, a new paradigm that jointly optimizes the large-scale path loss and the phase of transmitted signals. 
More specifically, compared to existing flexible-antenna systems, the main profits of PASS rely on the following three aspects. \textit{Firstly}, \textcolor{blue}{since long waveguides can be deployed in environments such as tunnels and airports in a manner similar to leaky coaxial cables (LCXs), PAs can be activatd at any point along the waveguide to enable the “last-meter” communication, which facilitates establishing favorable propagation conditions and maintaining stable LoS links in high-frequency bands.} \textit{Secondly}, the ``plug-and-play" modular structure allows scalable and cost-efficient antenna deployment. \textcolor{blue}{It is worthy noting that, although distributed MIMO can also improve communication quality  by moving antennas close to users, it requires dedicated radio-frequency chain for each antenna  and leads to higher hardware cost.} \textit{Thirdly}, the capability of simultaneously configuring baseband and pinching beamforming enables enhanced spectrum and energy efficiency performance.

The aforementioned benefits have motivated early-stage research endeavors focused on the employment of PASS in wireless communications. The authors of~\cite{ding} first presented performance analysis results of the PASS under various cases. By exploiting the PAs capability of reconfiguring wireless channels, it was verified that PASS can achieve superior performance compared to conventional fixed-position antenna systems. Furthermore, a comprehensive analytical framework was developed in~\cite{10976621}, where closed-form expressions for the outage probability and average rate were derived. The authors also characterized the optimal PAs positions for performance maximization under the consideration of waveguide attenuation in realistic conditions.
From the perspective of the pinching beamforming optimization, the authors of~\cite{yanqing} tackled with the PAs activation problem for maximizing the pinching beamforming gain under the simple single-waveguide and single-user scenario. Closed-form PAs activation positions were derived for simultaneously reducing the large-scale path loss and aligning the phases of signals from multiple PAs. Since each waveguide can only convey one data stream, the non-orthogonal multiple access (NOMA) assisted PASS scheme was further proposed in~\cite{kaidi} for the case of multiple users served by a single waveguide, where the PAs could only be activated at pre-configured positions. Moreover, the authors of~\cite{10909665} investigated the uplink performance of the multi-user PASS, where users transmit data to the base station (BS) equipped with a single waveguide in the orthogonal multiple access (OMA) manner. A minimum achievable data rate maximization problem was formulated, which was efficiently solved by realizing the phase alignment of signals from PAs and minimizing path losses.

\subsection{Motivations and Contributions}
By substituting conventional antenna elements with waveguides and radiating signals via activated PAs, the pinching beamforming in PASS can unlock additional spatial degrees of freedom (DoFs). However, existing literatures predominantly focus on the single-waveguide design for PASS. To support multi-user/multi-data stream transmission, multi-waveguide PASS should be employed, where each waveguide is connected with one RF chain to exploit the baseband signal processing capability. \textcolor{blue}{The authors in~\cite{shan2025multigroup} investigated the piniching beamforming in multi-waveguide PASS for supporting muti-group multicast communications. However, the systematical tansmission structure design was not discussed.} Although the baseband signal processing provides more DoFs, it leads to potential high computational complexity. To strike an appropriate trade-off between system performance and implementation complexity, it is imperative to devise customized transmission structures tailored for multi-waveguide PASS. This provides the primary motivation of this work.

To address the above issues, in this paper, we propose three basic transmission structures for PASS and study the joint baseband signal processing and pinching beamforming design for a general multi-group multicast communication system. The main contributions of this work are summarized as follows.
\begin{itemize}
    \item We propose three basic transmission structures for PASS, namely, \textit{waveguide multiplexing (WM)}, \textit{waveguide division (WD)}, and \textit{waveguide switching (WS)}, and discuss the corresponding benefits and drawbacks. 
    \item Building upon the proposed transmission structures, we investigate the joint baseband processing and beamforming design for the multi-group multicast communication system, with the unicast communication acting as a special case. A max-min fairness (MMF) optimization problem is formulated for each of the proposed structure, subject to the maximum transmit power constraint.  
    \item For WM, we invoke the penalty dual decomposition (PDD) method for solving the highly-coupled and non-convex MMF problem incorporating complex exponential and fractional terms. In particular, we employ the augmented Lagrangian relaxation to alleviate the stringent coupling constraints. Then, the alternating optimization (AO) is applied for iteratively solving the decomposed subproblems. Furthermore, we extend the proposed PDD algorithm to solve the MMF problems for both WD and WS. For the unicast case under WS, we also introduce a low-complexity algorithm to concurrently align the received signal phases and minimize the large-scale path loss for each individual user.

    \item Numerical results unveil that 1) the MMF performance is notably enhanced by the integration of PASS, outperforming conventional fixed-position MIMO system under both fully-digital and hybrid beamforming structures; 2) WS shows its superiority for unicast communications, while WM is preferable for multicast communications; and 3) the performance disparity between WM and WD can be substantially reduced when users are geographically dispersed.
    
\end{itemize}

\subsection{Organization and Notations}
The rest of this paper is structured as follows. Section II proposes three practical transmission structures to operate multi-waveguide PASS for multi-user/data transmission, based on which the signal radiation models are accordingly introduced. Section III presents the system model and the joint baseband processing and pinching beamforming problem formulations for the general multi-group multicast communication, with the aim of maximizing the minimum achievable rate across all users. Section IV proposes efficient PDD-based algorithms for solving the problems formulated for each proposed transmission structure. Section V provides numerical results to verify the superiority of PASS compared to state-of-the-art baselines. Finally, Section VI concludes the paper.  

$\textit {Notations}$: Scalars, vectors, and matrices are denoted by italic letters, bold-face lower-case, and bold-face upper-case, respectively. $\mathbb{C}^{N\times M}$ denotes the set of $N\times M$ complex-valued matrices. Superscripts $(\cdot)^*, (\cdot)^T, (\cdot)^H$, and $(\cdot)^{-1}$ denote the conjugate, transpose, conjugate transpose, and inversion operators, respectively. $|\cdot|$ and  $\left\|\cdot\right\|$ denote the  determinant and Euclidean norm of a matrix, respectively. $\text{Tr}\left(\cdot\right)$, $\left\|\cdot\right\|_F$, and $\text{vec}\left(\cdot\right)$ denote the trace, Frobenius norm, and vectorization of a matrix,  respectively. $[\cdot]_{m,n}$ denotes the $(m,n)$-th element of a matrix. $\mathbf{1}_{{N}}$ denotes the all-one row vector with length $N$. 
$\mathbb{E}$ denotes the expectation operator. $\circ$ denotes the Hadamard multiplication. All random variables are assumed to be \textit{zero} mean. 
\section{Practical Transmission Structures for Multi-Waveguide PASS}
Consider a downlink PASS consisting of $K$ dielectric waveguides, each fed by a dedicated RF chain and incorporating $N$ PAs. Denote the set of waveguides, the set of PAs on the $k$-th waveguide, and the set of all PAs by $\mathcal{K}_{\text{WG}} = \left\{1,\dots, K\right\}$, $\mathcal{N}_k=\left\{1,\dots, N\right\}$, and $\mathcal{M}=\left\{1, \dots, M\right\}$, respectively, with
$M=K\times N$. Let $\mathbf{s}=\left[s_1,\dots, s_K\right]\in\mathbb{C}^{1\times K}$ denote the vector of transmitted data streams, for which the size equals to the number of waveguides. Note that the transmitted signals are assumed to be normalized with $\mathbb{E}\left[\mathbf{s}\mathbf{s}^{H}\right]=\mathbf{I}_{K}$. 
All waveguides are assumed to be aligned parallel to the $x$-axis at a height of $d$. Accordingly, the positions of the feed point and the $n$-th antenna on the $k$-th waveguide are denoted by $\bar{\boldsymbol{\psi}}^k_{\text{p}}=[0,y_{\text{p}}^k,d]$ and $\boldsymbol{\psi}_{\text{p}}^{k,n} = \left[x_{\text{p}}^{k,n}, y_{\text{p}}^{k}, d\right]$, respectively.  Further denote the set of $x$-axis positions of PAs on the $k$-th waveguide and that of $x$-axis positions of PAs over all waveguides by $\mathbf{x}_{\text{p}}^{k} = \left[x_{\text{p}}^{k,1}, ..., x_{\text{p}}^{k,N}\right]$ and $\mathbf{X} = \left[\left(\mathbf{x}_{\text{p}}^{1}\right)^T, ..., \left(\mathbf{x}_{\text{p}}^{K}\right)^T\right]^T\in\mathbb{R}^{K\times N}$, respectively. Assume that PAs on each waveguide are placed in a successive order, i.e., $x_{\text{p}}^{k,{n+1}}>x_{\text{p}}^{k,{n}}, \forall 1\leq n< N, \forall k$, and the maximum deployment range of PAs is $L$. \textcolor{blue}{For supporting flexible adjustment of PAs positions, each PA is mounted on a sliding track over the waveguide, which requires a linear actuator, such as a servo or stepper motor, along with a position sensor and real-time control logic~\cite{liu2025pinching}.} The signal propagation response $\mathbf{g}\left(\mathbf{x}_{\text{p}}^{k}\right)\in\mathbb{C}^{N\times 1}$ from the feed point to PAs over the $k$-th waveguide is given by
\begin{align}
\label{eq:waveguide-channel}
    \mathbf{g}\left(\mathbf{x}_{\text{p}}^{k}\right) & = \frac{1}{\sqrt{N}}\left[e^{-j\frac{2\pi \left\|\boldsymbol{\psi}_{\text{p}}^{k,1}-\bar{\boldsymbol{\psi}}_{\text{p}}^k\right\|}{\lambda_{\text{g}}}},..., e^{-j\frac{2\pi \left\|\boldsymbol{\psi}_{\text{p}}^{k,N}-\bar{\boldsymbol{\psi}}_{\text{p}}^k\right\|}{\lambda_{\text{g}}}}\right]^T\nonumber\\
    &=\frac{1}{\sqrt{N}}\left[e^{-j\frac{2\pi x_{\text{p}}^{k,1}}{\lambda_{\text{g}}}},..., e^{-j\frac{2\pi x_{\text{p}}^{k,N}}{\lambda_{\text{g}}}}\right]^T,
\end{align}
where $\lambda_{\text{g}}=\frac{\lambda}{n_{\text{eff}}}$ denotes the guided wavelength with $\lambda$ and $n_{\text{eff}}$ representing the signal wavelength in the free-space and the effective refractive index of a dielectric waveguide~\cite{microwave}, respectively. We assume that each PA radiates an equal proportion of the power allocated to the corresponding waveguide. Therefore, the coefficient $\frac{1}{\sqrt{N}}$ is multiplied to the propagation response. \textcolor{blue}{Note that the waveguide propagation loss is omitted in~\eqref{eq:waveguide-channel}, which is negligible compared to the free-space path loss as demonstrated in~\cite{kaidi, xu2025pinching}.}

In the following, we introduce three practical transmission structures for PASS and present the corresponding signal radiation models.

\begin{figure}
    \centering
    \begin{subfigure}{\linewidth}
        \centering
        \includegraphics[width=0.9\linewidth]{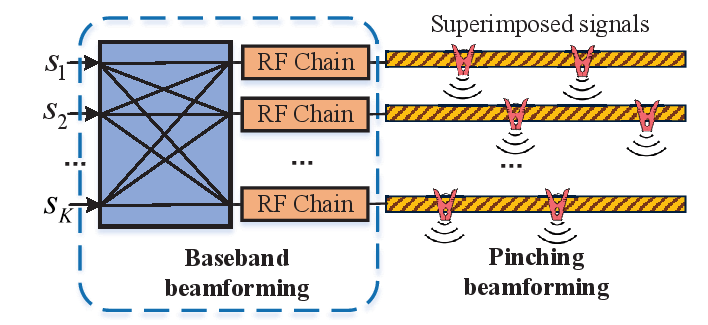}
        \caption{Waveguide multiplexing.}
        \label{fig:WM}
    \end{subfigure}
    \begin{subfigure}{\linewidth}
        \centering
        \includegraphics[width=0.9\linewidth]{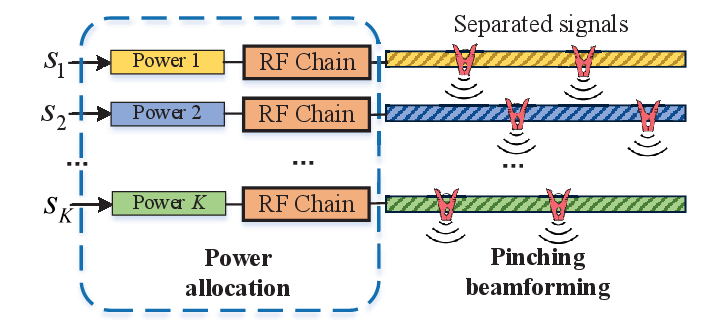}
        \caption{Waveguide division.}
        \label{fig:WD}
    \end{subfigure}
    \begin{subfigure}{\linewidth}
        \centering
        \includegraphics[width=0.9\linewidth]{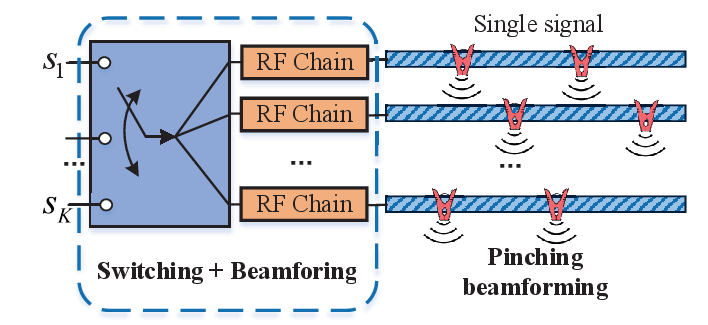}
        \caption{Waveguide switching.}
        \label{fig:TS}
    \end{subfigure}
    \caption{Proposed transmission structures for multi-waveguide PASS.}
    \label{fig:ThreeStructuresVertical}
\end{figure}
\subsection{Waveguide Multiplexing}
As shown in Fig. 1(a), for the WM structure, data streams are multiplexed at the baseband through the conventional digital beamforming, and then fed into the waveguides via the corresponding RF chains. Denote the baseband beamforming vector for the data stream $s_k$ by $\mathbf{w}_k\in\mathbb{C}^{K\times 1}$, then the signal radiated by the PASS to the free space is given by 
\begin{equation}
    \label{eq:WM-radiation}
    \tilde{\mathbf{s}}^{\text{WM}}= \mathbf{G}\left(\mathbf{X}\right)\sum_{k=1}
    ^K\mathbf{w}_k{s}_k\in\mathbb{C}^{M\times 1},
\end{equation}
where $\mathbf{G}\left(\mathbf{X}\right)\in\mathbb{C}^{M\times K}$ denotes the propagation response from feed points to PAs over all waveguides, given by
\begin{equation}
    \mathbf{G}\left(\mathbf{X}\right)=
    \begin{bmatrix}
    \mathbf{g}\left(\mathbf{x}_{\text{p}}^1 \right) & \mathbf{0} & \dots & \mathbf{0}\\

    \mathbf{0} & \mathbf{g}\left(\mathbf{x}_{\text{p}}^2 \right) & \dots & \mathbf{0}\\

    \vdots & \vdots & \ddots & \vdots\\

    \mathbf{0} &\mathbf{0} & \dots & \mathbf{g}\left(\mathbf{x}_{\text{p}}^K \right)
    \end{bmatrix}.
\end{equation}
\textcolor{blue}{For WM, since both the baseband and pinching beamforming are enabled, high degree of flexibility for the spatial-domain multiplexing is facilitated. However, the large number of variables and the tight coupling between the baseband and pinching beamforming parameters lead to complex system design.} 
\subsection{Waveguide Division}
As shown in Fig. 1(b), for the WD structure, each data stream is conveyed by one dedicated waveguide, where the baseband essentially carries out the power allocation. Without loss of generality, we assume that the data stream $s_k$ is conveyed by the $k$-th waveguide. 
Let $\mathbf{p}=\left[p_1, ..., p_{K}\right]$ denote the power allocation vector to the total $K$ data streams, then the signal radiated by the PASS to the free space is given by
\begin{equation}
    \label{eq:WD-radiation}
    \tilde{\mathbf{s}}^{\text{WD}}= \mathbf{G}\left(\mathbf{X}\right)
    \begin{bmatrix}
        \sqrt{p_1}s_1\\
        \vdots \\
        \sqrt{p_K}s_K\\
        
    \end{bmatrix}\in\mathbb{C}^{M\times 1}.
\end{equation}
\textcolor{blue}{We note that WD generally cannot achieve the same multiplexing gain as WM, since the baseband beamforming is deactivated. Nevertheless, WD is still appealing in practice, especially when the waveguides are deployed in geographically separate areas and the baseband multiplexing becomes inefficient. As only power allocation needs to be carried out at the baseband, the design complexity for WD is much reduced compared to that for WM.}  
\subsection{Waveguide Switching}
\textcolor{blue}{Different from WM and WD, the WS exploits the time domain and periodically serves different users via individual baseband and pinching beamforming in orthogonal time slots, as shown in Fig. 1(c).} Hence, when the data stream $s_k$ is conveyed, the signal radiated by the PASS to the free space is given by
\begin{equation}
    \label{eq:TS-radiation}
    \tilde{\mathbf{s}}^{\text{WS}}\left(t\right)= \mathbf{G}\left(\mathbf{X}\right)\mathbf{w}_k(t){s}_k(t)\in\mathbb{C}^{M\times 1}.
\end{equation}
\textcolor{blue}{By exploiting the extra time domain, WS can effectively avoid the multi-user interference. Moreover, the design of the baseband processing and the pinching beamforming for accommodating different users is not coupled, and thus, making it easier to handle. However, the periodical time-domain switching imposes stringent time synchronization requirements, which necessitates higher implementation complexity.}

In Table~I, we summarize the characteristics of the WM, WD, and WS transmission structures.

\begin{table*}[t] 
\centering
\renewcommand{\arraystretch}{1.3} 
\caption{Characteristics of the three proposed transmission architectures.}
\begin{tabular}{|c|c|c|c|}
\hline
\textbf{} & \textbf{Waveguide Multiplexing} & \textbf{Waveguide Division} & \textbf{Waveguide Switching} \\
\hline
Baseband beamforming & \textcolor{blue}{\checkmark} & \textcolor{blue}{\ding{55}} & \textcolor{blue}{\checkmark} \\
\hline
Pinching beamforming 
& \textcolor{blue}{\checkmark} & \textcolor{blue}{\checkmark} & \textcolor{blue}{\checkmark} \\
\hline
Time allocation & \textcolor{blue}{\ding{55}} & \textcolor{blue}{\ding{55}} & \textcolor{blue}{\checkmark} \\
\hline
Multi-user interference & \textcolor{blue}{\checkmark} & \textcolor{blue}{\checkmark} & \textcolor{blue}{\ding{55}} \\
\hline
\end{tabular}
\end{table*}

\section{System Model and Problem Formulation}
Based on the proposed transmission structures, in this section, we present the system model of the multi-group multicast communication in PASS and formulate the MMF optimization problem.   

\subsection{System Model}
Consider there exist $K$ multicast groups, with $G$ users in each group sharing the same information data. \textcolor{blue}{Let $\mathcal{K}_{\text{MC}}=\left\{1, \dots, K\right\}$ and $\mathcal{G}_{k}=\left\{1, \dots, G\right\}$ denote the set of multicast groups and that of users in the $k$-th multicast group, respectively. Denote the location of the $g$-th user in the $k$-th multicast group, i.e., the user $(k,g)$, by $\boldsymbol{\psi}_{\text{u}}^{k,g}=\left[x_{\text{u}}^{k,g},y_{\text{u}}^{k,g},0\right]$.} Then, the wireless channel between PAs on the $i$-th waveguide and the user $(k,g)$ is given by
\begin{align}
    & \mathbf{h}_{k,g}^{i}\left(\mathbf{x}_{\text{p}}^{i}\right)= \left[\frac{\eta e^{-j\frac{2\pi}{\lambda}\left\|\boldsymbol{\psi}^{k,g}_{\text{u}}-\boldsymbol{\psi}^{i,1}_{\text{p}}\right\|}}{\left\|\boldsymbol{\psi}^{k,g}_{\text{u}}-\boldsymbol{\psi}^{i,1}_{\text{p}}\right\|}, ..., \frac{\eta e^{-j\frac{2\pi}{\lambda}\left\|\boldsymbol{\psi}^{k,g}_{\text{u}}-\boldsymbol{\psi}_{\text{p}}^{i,N}\right\|}}{\left\|\boldsymbol{\psi}^{k,g}_{\text{u}}-\boldsymbol{\psi}_{\text{p}}^{i,N}\right\|}\right],
    \label{eq:h-channel}
\end{align}
where $\eta=\frac{\lambda}{4\pi}$ represents the channel gain at the reference distance of $1$~m, and $\left\|\boldsymbol{\psi}^{k,g}_{\text{u}}-\boldsymbol{\psi}_{\text{p}}^{i,n}\right\|$ is given by
\begin{equation}
   \left\|\boldsymbol{\psi}^{k,g}_{\text{u}}-\boldsymbol{\psi}_{\text{p}}^{i,n}\right\|=\sqrt{\left(x_{\text{u}}^{k,g}-x_{\text{p}}^{i,n}\right)^2+\left(y_{\text{u}}^{k,g}-y_{\text{p}}^{i}\right)^2+d^2}. 
\end{equation}
Further denote $\mathbf{h}_{k,g}\left(\mathbf{X}\right)=\left[\mathbf{h}^{1}_{k,g}\left(\mathbf{x}_{\text{p}}^1\right), \dots, \mathbf{h}^{K}_{k,g}\left(\mathbf{x}_{\text{p}}^K\right)\right]\in\mathbb{C}^{1\times M}$ as the channel vector from PAs over all waveguides to the $(k,g)$ user\textcolor{blue}{\footnote{\textcolor{blue}{Compared to hybrid MIMO, due to the deterministic nature of in-waveguide channel in PASS, it is intractable to collect independent pilot measurements for estimating the channel to each PA. This makes it difficult for the high-dimensional channel estimation with low-dimensional observations. To solve this issue, the authors of~\cite{11018390} introduced the antenna switching matrix-based least square and linear minimum mean square error estimators, which selected subsets of PAs in different time slots to realize the signal separation. In this work, we assume that the CSI is perfectly known at the BS with the method proposed in~\cite{11018390}, so as to explore the maximum performance brought by PASS.}}}. In the following, we discuss the signal model for the transmission structures WM, WD, and WS, separately.
\subsubsection{WM}
Based on the signal radiation model given in Eq.~\eqref{eq:WM-radiation}, the signal received by the user $(k,g)$ for WM can be given by\textcolor{blue}{\footnote{\textcolor{blue}{In this paper, we assume that the delay spread across PAs is much less than the symbol period given the narrowband setup. For the broadband case where the channel exhibits frequency selectivity, the orthogonal frequency division modulation (OFDM) design proposed in~\cite{11180028} can be adopted to avoid inter-symbol interference.}}}
\begin{align}
\label{eq:signal-WM}
        y_{k,g}^{\text{WM}} = &\underbrace{\overbrace{\mathbf{h}_{k,g}\left(\mathbf{X}\right)\mathbf{G}\left(\mathbf{X}\right)}^{\text{all-waveguide pinching beamforming}}\mathbf{w}_ks_k}_{\text{desired signal}} \nonumber\\
    & + \underbrace{\sum_{k'\neq k}^K \overbrace{\mathbf{h}_{k,g}\left(\mathbf{X}\right)\mathbf{G}\left(\mathbf{X}\right)}^{\text{all-waveguide pinching beamforming}}\mathbf{w}_{k'}s_{k'}}_{\text{interfering signal}} + n_{k,g},
\end{align}
where $n_{k,g}\sim \mathcal{CN}(0, \sigma_{k,g}^2)$ represents the additive white Gaussian noise (AWGN) at the $(k,g)$ user, with $\sigma_{k,g}^2$ denoting the noise power. Observing~\eqref{eq:signal-WM}, one can find that the pinching beamforming for each data stream is performed jointly by all waveguides. 
Hence, the achievable rate of the user $(k,g)$ for WM is given by
\begin{align}
\label{eq:rate-WM}
    R_{\text{MC},k,g}^{\text{WM}} & =\log_2\left(1+\frac{\left|\mathbf{h}_{k,g}\left(\mathbf{X}\right)\mathbf{G}\left(\mathbf{X}\right)\mathbf{w}_k\right|^2}{\sum\limits_{k'\neq k}^K \left|{\mathbf{h}_{k,g}\left(\mathbf{X}\right)\mathbf{G}\left(\mathbf{X}\right)}\mathbf{w}_{k'}\right|^2+\sigma_k^2}\right)\nonumber\\
    & =\log_2\left(1+\frac{\left|\sum\limits_{i=1}^K\sum\limits_{n=1}^Nw_{k,i}\frac{\eta e^{-j\phi_{k,g}^{i,n}}}{\left\|\boldsymbol{\psi}_{\text{u}}^{k,g}-\boldsymbol{\psi}_{\text{p}}^{i,n}\right\|}\right|^2}{I^{\text{WM}}_{k,g}+N\sigma_{k,g}^2}\right),
\end{align}
where $w_{k,i}$ represents the $i$-th element of $\mathbf{w}_k$, and $\phi_{k,g}^{i,n}$ is given by
\begin{equation}
    \phi_{k,g}^{i,n} \triangleq \frac{2\pi}{\lambda}\left\|\boldsymbol{\psi}_{\text{u}}^{k,g}-\boldsymbol{\psi}_{\text{p}}^{i,n}\right\| + \frac{2\pi}{\lambda_g}{x}_{\text{p}}^{i,n}.
\end{equation}
Still referring to~\eqref{eq:rate-WM},
$I^{\text{WM}}_{k}$ denotes the interference signal power, given by
\begin{equation}
    I^{\text{WM}}_{k,g}=\sum\limits_{k'\neq k}^K\left|\sum\limits_{i=1}^K\sum\limits_{n=1}^Nw_{k',i}\frac{\eta e^{-j\phi_{k,g}^{i,n}}}{\left\|\boldsymbol{\psi}_{\text{u}}^{k,g}-\boldsymbol{\psi}_{\text{p}}^{i,n}\right\|}\right|^2.
\end{equation}
\subsubsection{WD}
Different from WM, the pinching beamforming for each data stream is performed by its designated waveguide under the WD structure. As such, based on the signal radiation model given in Eq.~\eqref{eq:WD-radiation}, the signal received by the $(k,g)$ user for WD is given by
\begin{align}
\label{eq:signal-WD}
        y_{k,g}^{\text{WD}} & = \underbrace{\overbrace{\mathbf{h}_{k,g}^{k}\left(\mathbf{x}_{\text{p}}^{k}\right)\mathbf{g}\left(\mathbf{x}_{\text{p}}^{k}\right)}^{\text{per-waveguide pinching beamforming}}\sqrt{{p_k}}s_k}_{\text{desired signal}} \nonumber\\
    & + \underbrace{\sum_{k'\neq k}^K \overbrace{\mathbf{h}_{k,g}^{k'}\left(\mathbf{x}_{\text{p}}^{k'}\right)\mathbf{g}\left(\mathbf{x}_{\text{p}}^{k'}\right)}^{\text{per-waveguide pinching beamforming}}\sqrt{p_{k'}}s_{k'}}_{\text{interfering signal}} + n_{k,g}.
\end{align}
Accordingly, the achievable rate of the user $(k,g)$ is given by 
\begin{align}
    & R_{\text{MC},k}^{\text{WD}}\nonumber\\
    &= \log _2\left(1+\frac{p_k\left|\left(\mathbf{h}_{k,g}^{k}\left(\mathbf{x}_{\text{p}}^{k}\right)\right)^T\mathbf{g}\left(\mathbf{x}_{\text{p}}^{k}\right)\right|^2}{\sum\limits_{k'\neq k}^{K}p_{k'}\left|\left(\mathbf{h}^{k'}_{k,g}\left(\mathbf{x}_{\text{p}}^{k'}\right)\right)^T\mathbf{g}\left(\mathbf{x}_{\text{p}}^{k'}\right)\right|^2+\sigma_{k,g}^2}\right)\nonumber\\
    &= \log_2\left(1+\frac{p_k\left|\sum\limits_{n=1}^N\frac{\eta e^{-j\phi_{k,g}^{k,n}}}{\left\|\boldsymbol{\psi}_{\text{u}}^{k,g}-\boldsymbol{\psi}_{\text{p}}^{k,n}\right\|}\right|^2}{\sum\limits_{k'\neq k}^Kp_{k'}\left|\sum\limits_{n=1}^N\frac{\eta e^{-j\phi_{k,g}^{k',n}}}{\left\|\boldsymbol{\psi}_{\text{u}}^{k,g}-\boldsymbol{\psi}_{\text{p}}^{k',n}\right\|}\right|^2+N\sigma_{k,g}^2}\right).
    \label{eq:rate-WD}
\end{align}
\subsubsection{WS}
For WS, only the scheduled signal is transmitted to its corresponding user in each time slot. Therefore, when the signal $s_k$ is conveyed, the signal received by the user $(k,g)$ is given by
\begin{align}
\label{eq:signal-TS}
        y_{k,g}^{\text{WS}} = &\overbrace{\mathbf{h}_{k,g}\left(\mathbf{X}_k\right)\mathbf{G}\left(\mathbf{X}_k\right)}^{\text{all-waveguide pinching beamforming}}\mathbf{w}_ks_k + n_{k,g},
\end{align}
where $\mathbf{X}_k$ denotes the PAs position matrix for serving group $k$. 
From Eq.~\eqref{eq:signal-TS}, one can observe that the inter-group interference does not exist in the received signal.
Let $\boldsymbol{\lambda}=\left\{\lambda_1, \dots, \lambda_K\right\}$ denote the vector of normalized transmission time allocated to total $K$ data streams, where $0 \leq \lambda_k\leq 1$, $\sum_{k=1}^K\lambda_k=1$. 
Then, the achievable rate of the user $(k,g)$ for the WS is given by

\begin{align}
\label{eq:rate-TS}
    R_{\text{MC},k,g}^{\text{WS}}&=\lambda_k\log_2\left(1+\frac{\left|\mathbf{h}_{k,g}\left(\mathbf{X}_k\right)\mathbf{G}\left(\mathbf{X}_k\right)\mathbf{w}_k\right|^2}{\sigma_{k,g}^2}\right)\nonumber\\
    & =\lambda_k\log_2\left(1+\frac{\left|\sum\limits_{i=1}^K\sum\limits_{n=1}^Nw_{k,i}\frac{\eta e^{-j\phi_{k,g}^{i,n}}}{\left\|\boldsymbol{\psi}_{\text{u}}^{k,g}-\boldsymbol{\psi}_{\text{p}}^{i,n}\right\|}\right|^2}{N\sigma_{k,g}^2}\right),
\end{align}
where the multiplier $\lambda_k$ is given due to the fact that the BS sends information to the group $k$ with $\mathbf{w}_k$ by employing $\lambda_k$ fraction of the total transmission time. 

\subsection{Problem Formulation}
\textcolor{blue}{Owing to the mutual dependence among users whose performance is jointly influenced by the shared configuration of PAs positions, it is crucial to strike an appropriate trade-off that ensures users fairness while optimizing overall system performance. To this end, we focus on maximizing the minimum achievable data rate across all users, facilitating the MMF problem.} Specifically, for multicast communications, let $\mathbf{W}=\left\{\mathbf{w}_k\right\}_{\forall k}$, then the optimization problem for WM, WD, and WS can be formulated as
\begin{subequations}
\label{eq:multicast-optimization-problem}
\begin{equation}
\label{eq:multicast-objective-function}
\mathcal{P}_{\text{MC}}(0): \max_{\left\{\mathbf{X},\mathbf{W},\mathbf{p},\boldsymbol{\lambda}\right\}}\min_{k\in\mathcal{K}_{\text{MC}}}\min_{g\in\mathcal{G}_{\text{k}}}R_{\text{MC},k,g}^{\text{Y}},
\end{equation}
\begin{equation}
    \label{eq:multicast-maximum-power}
{\rm{s.t.}} \ \  P_{\text{max}}\geq \left\{
    \begin{array}{c}
    \sum_{k=1}^K\left\|\mathbf{w}_k\right\|^2, \text{if Y=WM},\\
    \sum_{k=1}^Kp_k, \text{if Y=WD},\\
   \left\|\mathbf{w}_k\right\|^2, \forall k, \text{if Y=WS},
    \end{array}
    \right.
\end{equation}
\begin{equation}
\label{eq:multicast-position_constraint1}
0\leq x_{\mathfrak{p}}^{k,n}\leq L,\forall n,k,
\end{equation}
\begin{equation}
\label{eq:multicast-position_constraint2}
x_{\mathrm{p}}^{k,n+1}-x_{\mathrm{p}}^{k,n}\geq\Delta,\forall n,k,
\end{equation}
\begin{equation}
\label{eq:multicast-time-allocation-constraint}
    0\leq \lambda_k\leq 1, \sum_{k=1}^K\lambda_k=1, \forall k,
\end{equation}
\end{subequations}
where $\text{Y}\in$ $\left\{\text{WM, WD, WS}\right\}$ indicates the applied transmission structure. The baseband beamforming variable $\mathbf{W}$, the power allocation vector $\mathbf{p}$, and the time allocation vector $\boldsymbol{\lambda}$ are only valid when the WM/WS, the WD, and the WS is employed, respectively. 
Constraint~\eqref{eq:multicast-maximum-power} ensures the BS transmit power does not exceed the threshold $P_{\text{max}}$. Constraint~\eqref{eq:multicast-position_constraint1} restricts the length of waveguides. Constraint~\eqref{eq:multicast-position_constraint2} guarantees the minimum spacing among PAs to be no smaller than $\Delta$ for avoiding antenna coupling~\cite{10981775}. Constraint~\eqref{eq:multicast-time-allocation-constraint} is the time allocation constraint, which is only valid when the WS structure is applied. To transform problem~\eqref{eq:multicast-optimization-problem} into a more tractable form, we rewrite~\eqref{eq:multicast-optimization-problem} as follows:
\begin{subequations}
\label{eq:multicast-optimization-problem-2}
\begin{equation}
\label{eq:multicast-objective-function-2}
\mathcal{P}_{\text{MC}}(1): \max_{\left\{\mathbf{X},\mathbf{W},\mathbf{p},\boldsymbol{\lambda},R_{\text{min}}\right\}}R_{\text{min}},
\end{equation}
\begin{equation}
    {\rm{s.t.}} \ \ R_{\text{MC},k,g}^{\text{Y}}\geq R_{\text{min}}, \forall k\in\mathcal{K}_{\text{MC}}, \forall g\in\mathcal{G}_{\text{k}},
\end{equation}
\begin{equation}
\eqref{eq:multicast-maximum-power}-\eqref{eq:multicast-time-allocation-constraint}.
\end{equation}
\end{subequations}

\subsection{Discussions}
\textcolor{blue}{For unicast communications, consider there exist $K$ users, with $\mathcal{K}_{\text{UC}}=\left\{1, \dots, K\right\}$ representing the set of users. Let $\boldsymbol{\psi}_{\text{u}}^{k}=\left[x_{\text{u}}^{k},y_{\text{u}}^{k},0\right]$ denote the location of the $k$-th user.} Then, the wireless channel between PAs on the $i$-th waveguide and the user $k\in\mathcal{K}_{\text{UC}}$, i.e., $\mathbf{h}_{k}^{i}\left(\mathbf{x}_{\text{p}}^{i}\right)$, can be expressed in a similar way as in~\eqref{eq:h-channel} by substituting $\boldsymbol{\psi}_{\text{u}}^{k,g}$ with $\boldsymbol{\psi}_{\text{u}}^{k}$. Correspondingly, the channel vector from all PAs and the $k$-th user is represented by $\mathbf{h}_{k}\left(\mathbf{X}\right)=\left[\mathbf{h}^{1}_{k}\left(\mathbf{x}_{\text{p}}^1\right), \dots,  \mathbf{h}^{K}_{k}\left(\mathbf{x}_{\text{p}}^K\right)\right]\in\mathbb{C}^{1\times M}$. For WM, WD, and WS, the achievable rate of user $k$, denoted by $R_{\text{UC},k}^{\text{WM}}$, $R_{\text{UC},k}^{\text{WD}}$, and $R_{\text{UC},k}^{\text{WS}}$, can be given by substituting $\mathbf{h}_{k,g}\left(\mathbf{X}\right)$, $\mathbf{h}_{k,g}^k$, and $\mathbf{h}_{k,g}^{k'}$ with $\mathbf{h}_{k}\left(\mathbf{X}\right)$, $\mathbf{h}_{k}^k$, and $\mathbf{h}_{k}^{k'}$, in~\eqref{eq:rate-WM},~\eqref{eq:rate-WD}, and~\eqref{eq:rate-TS}, respectively. Hence, the MMF problem for the unicast case can be formulated as follows:
\begin{subequations}
\label{eq:unicast-optimization-problem}
\begin{equation}
\label{eq:unicast-objective-function}
\mathcal{P}_{\text{UC}}(0): \max_{\left\{\mathbf{X},\mathbf{W},\mathbf{p},\boldsymbol{\lambda}\right\}}\min_{k\in\mathcal{K}_{\text{UC}}}R_{\text{UC},k}^{\text{Y}},
\end{equation}
\begin{equation}
{\rm{s.t.}} \ \ \eqref{eq:multicast-maximum-power}-\eqref{eq:multicast-time-allocation-constraint}.
\end{equation}
\end{subequations}
Again, to improve the tractability of problem $\mathcal{P}_{\text{UC}}(0)$, we rewrite it into the following equivalent one:
\begin{subequations}
\label{eq:unicast-optimization-problem-2}
\begin{equation}
\label{eq:unicast-objective-function-2}
\mathcal{P}_{\text{UC}}(1): \max_{\left\{\mathbf{X},\mathbf{W},\mathbf{p},\boldsymbol{\lambda},R_{\text{min}}\right\}}R_{\text{min}},
\end{equation}
\begin{equation}
\label{eq:minimum-rate-constraint-unicast}
    {\rm{s.t.}} \ \ R_{\text{UC},k}^{\text{Y}}\geq R_{\text{min}}, \forall k\in\mathcal{K}_{\text{UC}},
\end{equation}
\begin{equation}
\eqref{eq:multicast-maximum-power}-\eqref{eq:multicast-time-allocation-constraint}.
\end{equation}
\end{subequations}

Notably, the formulated optimization problem~\eqref{eq:unicast-optimization-problem-2} for unicast communications can be regarded as a special case of the optimization problem~\eqref{eq:multicast-optimization-problem-2} for multicast communications. More specifically, problem~\eqref{eq:unicast-optimization-problem-2} can be obtained from problem~\eqref{eq:multicast-optimization-problem-2} by setting the number of users in each multicast group to one, i.e., $\left|\mathcal{G}_k\right|=1,\forall k$. Motivated by this, we focus on the optimization problem~\eqref{eq:multicast-optimization-problem-2} for multicast communications, while problem~\eqref{eq:unicast-optimization-problem-2} for unicast communications can be solved in a similar manner. However, problem~\eqref{eq:multicast-optimization-problem-2} is intractable to solve since the constraint~\eqref{eq:minimum-rate-constraint-unicast} is highly non-convex w.r.t. $\mathbf{X}$ due to the complicated exponential components. Moreover, the involved variables $\mathbf{X}$ and $\left\{\mathbf{w}_k\right\}$ are highly coupled for the WM and TS structures. Generally, there is no established approach for efficiently solving such non-convex optimization problem. In the following, we invoke the PDD method for alternatively updating multi-block variables, which is guaranteed to converge to the final stable solution efficiently.

\section{Proposed Solutions}
In this section, we first propose a PDD-based iterative algorithm to solve the MMF problem for WM, which is further extended to solve the problems for WD and WS. 
\subsection{Proposed Solution for WM}
To deal with the fractional format of the SINR expression in problem~\eqref{eq:multicast-optimization-problem-2} for WM, we first equivalently transform it into a more tractable form with \textbf{Lemma~\ref{lemma:SINR-transform}}.
\begin{lemma}
\label{lemma:SINR-transform}
The problem~\eqref{eq:multicast-optimization-problem-2} for WM is equivalent to the following one:
\begin{subequations}
\label{eq:multicast-optimization-problem-WM2}
\begin{equation}
\label{eq:multicast-objective-function-WM2}
\max_{\left\{\mathbf{X},\mathbf{W},\gamma\right\}}\gamma,
\end{equation}
\begin{equation}
\label{eq:multicast_rate_constraint-WM2}
{\rm{s.t.}} \ \ y\left(\mu_{k,g},\mathbf{X},\mathbf{W}\right)\geq \gamma, \forall k\in\mathcal{K}_{\text{MC}}, \forall g\in\mathcal{G}_{\text{k}},
\end{equation}
\begin{equation}
\label{eq:multicast_power_constraint-WM1}
\sum_{k=1}^K\left\|\mathbf{w}_k\right\|^2\leq P_{\text{max}},
\end{equation}
\begin{equation}
\eqref{eq:multicast-position_constraint1}, \eqref{eq:multicast-position_constraint2}.
\end{equation}
\end{subequations}
In~\eqref{eq:multicast-optimization-problem-WM2}, $y(\mu_{k,g},\mathbf{X},\mathbf{W})$ is given by
\begin{align}
y&(\mu_{k,g},\mathbf{X},\mathbf{W})=2\Re\left\{\mu_{k,g}^*\mathbf{h}_{k,g}\left(\mathbf{X}\right)\mathbf{G}\left(\mathbf{X}\right)\mathbf{w}_k\right\}\nonumber\\&-|\mu_{k,g}|^2\left(\sum\limits_{k'\neq k}^K  \left|{\mathbf{h}_{k,g}\left(\mathbf{X}\right)\mathbf{G}\left(\mathbf{X}\right)}\mathbf{w}_{k'}\right|^2+\sigma_k^2\right),
\label{eq:y-mu}
\end{align}
where $\boldsymbol{\mu}=\{\mu_{k,g}\}_{\forall k,g}$ with $\mu_{k,g}$ given by
\begin{align}
\label{eq:optimal_mu}
\mu_{k,g}=\frac{\mathbf{h}_{k,g}\left(\mathbf{X}\right)\mathbf{G}\left(\mathbf{X}\right)\mathbf{w}_k}{\sum\limits_{k'\neq k}^K  \left|{\mathbf{h}_{k,g}\left(\mathbf{X}\right)\mathbf{G}\left(\mathbf{X}\right)}\mathbf{w}_{k'}\right|^2+\sigma_k^2}.
\end{align}
\end{lemma}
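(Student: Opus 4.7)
The plan is to prove the lemma by two standard steps: (i) replace the rate objective with an SINR objective using the monotonicity of $\log_2(1+\cdot)$, and (ii) apply the Shen--Yu quadratic transform to the resulting fractional SINR expression so that the inner fraction is expressed as the pointwise maximum of a quadratic function of an auxiliary variable $\mu_{k,g}$, which can then be lifted into the outer maximization. Throughout, the constraints \eqref{eq:multicast-position_constraint1}, \eqref{eq:multicast-position_constraint2} and the power budget \eqref{eq:multicast_power_constraint-WM1} carry over unchanged because they do not involve the SINR.

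Concretely, I would first note that because $\log_2(1+t)$ is strictly increasing on $t\geq 0$, the rate constraint $R_{\text{MC},k,g}^{\text{WM}}\geq R_{\min}$ is equivalent to the SINR constraint $\mathrm{SINR}_{k,g}(\mathbf{X},\mathbf{W})\geq\gamma$ under the correspondence $\gamma = 2^{R_{\min}}-1$, where $\mathrm{SINR}_{k,g}=|A_{k,g}|^2/B_{k,g}$ with $A_{k,g}\triangleq \mathbf{h}_{k,g}(\mathbf{X})\mathbf{G}(\mathbf{X})\mathbf{w}_k$ and $B_{k,g}\triangleq \sum_{k'\neq k}|\mathbf{h}_{k,g}(\mathbf{X})\mathbf{G}(\mathbf{X})\mathbf{w}_{k'}|^2+\sigma_k^2>0$. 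Maximizing $R_{\min}$ is therefore equivalent to maximizing $\gamma$.

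Next I would invoke the quadratic transform identity: for any $A\in\mathbb{C}$ and $B>0$,
\begin{equation*}
\frac{|A|^2}{B}=\max_{\mu\in\mathbb{C}}\bigl\{2\Re(\mu^{*}A)-|\mu|^2 B\bigr\},
\end{equation*}
with unique maximizer $\mu^{\star}=A/B$. This is verified by completing the square in $\mu$, or equivalently by setting the Wirtinger gradient with respect to $\mu^{*}$ to zero. Substituting $A=A_{k,g}$ and $B=B_{k,g}$ yields $\mathrm{SINR}_{k,g}=\max_{\mu_{k,g}} y(\mu_{k,g},\mathbf{X},\mathbf{W})$ with the maximizer exactly \eqref{eq:optimal_mu}.

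Finally I would close the equivalence with two short directions. If $(\mathbf{X},\mathbf{W},R_{\min})$ is feasible for $\mathcal{P}_{\text{MC}}(1)$, then choosing $\mu_{k,g}$ by \eqref{eq:optimal_mu} gives $y(\mu_{k,g},\mathbf{X},\mathbf{W})=\mathrm{SINR}_{k,g}\geq 2^{R_{\min}}-1\triangleq\gamma$, so $(\mathbf{X},\mathbf{W},\boldsymbol{\mu},\gamma)$ is feasible for \eqref{eq:multicast-optimization-problem-WM2} with the same $\gamma$. Conversely, any feasible $(\mathbf{X},\mathbf{W},\boldsymbol{\mu},\gamma)$ satisfies $\mathrm{SINR}_{k,g}\geq y(\mu_{k,g},\mathbf{X},\mathbf{W})\geq\gamma$ by the identity above, hence $R_{\text{MC},k,g}^{\text{WM}}\geq\log_2(1+\gamma)$, so $(\mathbf{X},\mathbf{W},\log_2(1+\gamma))$ is feasible for $\mathcal{P}_{\text{MC}}(1)$. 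The two problems therefore share optimal $(\mathbf{X},\mathbf{W})$ and their optimal values are tied by $R_{\min}^{\star}=\log_2(1+\gamma^{\star})$. The only subtle point, which I view as the main thing to get right rather than a genuine obstacle, is arguing that lifting the inner $\max_{\mu_{k,g}}$ into the joint outer maximization preserves the feasible set for $\gamma$; the two-direction argument above handles this cleanly because the constraint is separable across $(k,g)$ in $\mu_{k,g}$.
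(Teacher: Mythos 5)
Your proposal is correct and follows essentially the same route as the paper's own proof: both rely on the quadratic (fractional-programming) transform, writing the SINR as $\max_{\mu_{k,g}}y(\mu_{k,g},\mathbf{X},\mathbf{W})$ with the maximizer \eqref{eq:optimal_mu} obtained by stationarity of the concave quadratic in $\mu_{k,g}$. Your version is somewhat more explicit than the paper's (you spell out the $\log_2(1+\cdot)$ monotonicity step and the two-direction feasibility argument, which the paper leaves implicit), but the underlying idea is identical.
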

\begin{proof}
    See Appendix~\ref{app:lemma-1-proof}.
\end{proof}

Observing~\eqref{eq:y-mu}, we can find that $y\left(\mu_{k,g},\mathbf{X},\mathbf{W}\right)$ is highly complex w.r.t. $\mathbf{X}$ due to the exponential and fractional expressions, which makes it hard to directly apply existing optimization methods for solving~\eqref{eq:multicast-optimization-problem-WM2}. To tackle this issue, we first introduce auxiliary variables  $e_{k,g}^{i,n}=\frac{2\pi}{\lambda}\left\|\boldsymbol{\psi}^{k,g}_{\text{u}}-\boldsymbol{\psi}_{\text{p}}^{i,n}\right\| + \frac{2\pi}{\lambda_g}{x}_{\text{p}}^{i,n}$ and $u_{k,g}^{i,n}=\frac{{\eta} e^{-je_{k,g}^{i,n}}}{\sqrt{N}\left\|\boldsymbol{\psi}^{k,g}_{\text{u}}-\boldsymbol{\psi}_{\text{p}}^{i,n}\right\|}$, $\forall i\in\mathcal{K}_{\text{WG}}, n\in\mathcal{N}_i, k\in\mathcal{K}_{\text{MC}}, g\in\mathcal{G}_k$. Then, we have the following equality constraints:
\begin{subequations}
\begin{equation}
a_{k,g}^{i,n}\triangleq u_{k,g}^{i,n}\left\|\boldsymbol{\psi}^{k,g}_{\text{u}}-\boldsymbol{\psi}_{\text{p}}^{i,n}\right\| -{\frac{\eta}{ \sqrt{N}}} e^{-je_{k,g}^{i,n}}=0, \forall k, g, i, n,
\label{eq:u-equality}
\end{equation}
\begin{equation}
\label{eq:e-equality}
b_{k,g}^{i,n}\triangleq e_{k,g}^{i,n}-\left(\frac{2\pi}{\lambda}\left\|\boldsymbol{\psi}^{k,g}_{\text{u}}-\boldsymbol{\psi}_{\text{p}}^{i,n}\right\| + \frac{2\pi}{\lambda_g}{x}_{\text{p}}^{i,n}\right)=0, \forall k, g, i, n.
\end{equation}
\label{eq:equality_constraint}
\end{subequations} 
Define variables $\mathbf{e}_{k,g}\in\mathbb{R}^{K\times N}$ with the $(i,n)$-th element being $e_{k,g}^{i,n}$, and $\left(\mathbf{u}_{k,g}^{i}\right)^T=\begin{bmatrix}
u_{k,g}^{i,1},u_{k,g}^{i,2},\ldots,u_{k,g}^{i,N}
\end{bmatrix}\in\mathbb{C}^{1\times N}$ is the pinching beamforming vector of user $g$ in group $k$ using PAs in $i$-th waveguide. $\mathbf{u}_{k,g}^T=\begin{bmatrix}
\left(\mathbf{u}_{k,g}^{1}\right)^T,\left(\mathbf{u}_{k,g}^{2}\right)^T,\ldots,\left(\mathbf{u}_{k,g}^{K}\right)^T
\end{bmatrix}\in\mathbb{C}^{1\times M}$ stacks the pinching beamforming coefficients for all the PAs. Furthermore, denote {$\mathbf{E}=\left\{\mathbf{e}_{k,g}\right\}_{\forall k,g}$,} $\mathbf{U}=\left\{\mathbf{u}_{k,g}\right\}_{\forall k,g}$. Then, problem~\eqref{eq:multicast-optimization-problem-WM2} can be equivalently written as
\begin{subequations}
\label{eq:multicast-optimization-problem-WM3}
\begin{equation}
\label{eq:multicast-objective-function-WM3}
\max_{\left\{\mathbf{X},\mathbf{W},\boldsymbol{\mu},\gamma,\mathbf{U},\mathbf{E}\right\}}\gamma,
\end{equation}
\begin{align}
\label{eq:multicast_rate_constraint-WM3}
{\rm{s.t.}} \ \ y(\mu_{k,g},\mathbf{u}_{k,g},\mathbf{w}_k)\geq \gamma,\forall k,g,
\end{align}
\begin{equation}
\eqref{eq:multicast-position_constraint1}, \eqref{eq:multicast-position_constraint2}, \eqref{eq:multicast_power_constraint-WM1}, \eqref{eq:u-equality}, \eqref{eq:e-equality},
\end{equation}
\end{subequations}
where $y(\mu_{k,g},\mathbf{u}_{k,g},\mathbf{w}_k)$ in~\eqref{eq:multicast_rate_constraint-WM2} is given by 
\begin{align}
y(\mu_{k,g},\mathbf{u}_{k,g},&\mathbf{w}_k)=2\Re\left\{\mu_{k,g}^*\mathbf{u}_{k,g}\boldsymbol{\Sigma}\mathbf{w}_k\right\}\nonumber\\
&-|\mu_{k,g}|^2\left(\sum\limits_{k'\neq k}^K  \left|\mathbf{u}_{k,g}\boldsymbol{\Sigma}\mathbf{w}_{k'}\right|^2+\sigma_k^2\right),
\end{align}
and $\boldsymbol{\Sigma}=\mathrm{blkdiag}(\mathbf{1}_{N\times1},\mathbf{1}_{N\times1},\ldots,\mathbf{1}_{N\times1})\in\mathbb{R}^{M\times K}$ denotes a block diagonal matrix with all non-zero elements being $1$. The conventional AO method can be applied for solving problem~\eqref{eq:multicast-optimization-problem-WM3} by decomposing it into a sequence of subproblems, which however can not guarantee the convergence to stationary solutions due to the equality coupling constraints~\eqref{eq:u-equality} and \eqref{eq:e-equality}~\cite{9120361,9119203}. To tackle this specific challenge, in the following, we employ the PDD algorithm that alleviates the stringent coupling constraints through augmented Lagrangian relaxation.

Denote by $\mathcal{L}\left(\mathcal{V};\left\{\boldsymbol{\lambda}_{k,g}^u\right\}, \left\{\boldsymbol{\lambda}_{k,g}^e\right\}, \rho\right)$ the augmented Lagrangian function with penalty factor $\rho$ and dual variables $\boldsymbol{\lambda}_{k,g}^u\in\mathbb{C}^{K\times N}, \boldsymbol{\lambda}_{k,g}^e\in\mathbb{C}^{K\times N}$ corresponding to the equality constraints~\eqref{eq:u-equality} and \eqref{eq:e-equality}, respectively, where $\mathcal{V}=\left\{\mathbf{X},\mathbf{U},\mathbf{E}\right\}$. With the definition of $\mathcal{L}\left(\mathcal{V};\left\{\boldsymbol{\lambda}_{k,g}^u\right\}, \left\{\boldsymbol{\lambda}_{k,g}^e\right\}, \rho\right)$ given in~\eqref{eq:Lagrangian_function} at the top of the next page, where $\mathbf{A}_{k,g}^{u}=\left\{a_{k,g}^{i,n}\right\}$ and $\mathbf{B}_{k,g}^{e}=\left\{b_{k,g}^{i,n}\right\}$ stack the residuals of the equality constraints in~\eqref{eq:equality_constraint},
\begin{figure*}
\begin{align}
\label{eq:Lagrangian_function}
\mathcal{L}\left(\mathcal{V};\left\{\boldsymbol{\lambda}_{k,g}^u\right\}, \left\{\boldsymbol{\lambda}_{k,g}^e\right\}, \rho\right)=\sum_{k=1}^K\sum_{g=1}^G\left(\left\|\mathbf{A}_{k,g}^{u}+\rho\boldsymbol{\lambda}_{k,g}^u\right\|^2+\left\|\mathbf{B}_{k,g}^{e}+\rho\boldsymbol{\lambda}_{k,g}^e\right\|^2\right).
\end{align}
\hrulefill
\end{figure*}
problem~\eqref{eq:multicast-optimization-problem-WM3} can be equivalently transformed to the following AL problem: 
\begin{subequations}
\label{eq:multicast-optimization-problem-AL}
\begin{align}
\label{eq:multicast-objective-function-AL}
\max_{\left\{\mathbf{X},\mathbf{W},\boldsymbol{\mu},\gamma,\mathbf{U},\mathbf{E}\right\}}\gamma&-\frac{1}{2\rho}\times \mathcal{L}\left(\mathcal{V};\left\{\boldsymbol{\lambda}_{k,g}^u\right\}, \left\{\boldsymbol{\lambda}_{k,g}^e\right\}, \rho\right),
\end{align}
\begin{equation}
\label{eq:multicast_rate_constraint-AL}
{\rm{s.t.}} \ \ y(\mu_{k,g},\mathbf{u}_{k,g},\mathbf{w}_k)\geq \gamma,\forall k,g,
\end{equation}
\begin{equation}
\eqref{eq:multicast-position_constraint1}, \eqref{eq:multicast-position_constraint2},\eqref{eq:multicast_power_constraint-WM1}.
\end{equation}
\end{subequations}
For solving problem~\eqref{eq:multicast-optimization-problem-AL}, we optimize the following subproblems in an iterative manner.

\subsubsection{Subproblem w.r.t. $\left\{\gamma,\mathbf{W}\right\}$}
The auxiliary variable $\gamma$ and the baseband beamforming $\mathbf{W}$ are jointly optimized by solving the following constrained optimization problem:
\begin{subequations}
\label{eq:multicast-optimization-subproblem-WQ}
\begin{align}
\label{eq:multicast-objective-function-WQ}
\max_{\left\{\gamma,\mathbf{W}\right\}}\gamma,
\end{align}
\begin{equation}
\label{eq:multicast_rate_constraint-WQ}
{\rm{s.t.}} \ \ y\left(\mu_{k,g},\mathbf{W}\right)\geq \gamma,\forall k,g,
\end{equation}
\begin{equation}
\eqref{eq:multicast_power_constraint-WM1}.
\end{equation}
\end{subequations}
The above problem is convex, which can be efficiently solved using standard convex problem solvers such as CVX~\cite{boyd2004convex}.

\subsubsection{Subproblem w.r.t. $\left\{\gamma,\mathbf{U}\right\}$}
The auxiliary variable $\gamma$ and $\mathbf{U}$ can be optimized by solving the following problem:
\begin{subequations}
\label{eq:multicast-optimization-subproblem-U}
\begin{align}
\label{eq:multicast-objective-function-U}
\max_{\left\{\gamma, \mathbf{U}\right\}}\gamma-\sum_{k=1}^K\sum_{g=1}^G\left\|\mathbf{R}_{k,g}\mathbf{u}_{k,g}+\boldsymbol{\zeta}_{k,g}\right\|^2,
\end{align}
\begin{equation}
\label{eq:multicast_rate_constraint-U}
{\rm{s.t.}} \ \ y\left(\mu_{k,g},\mathbf{U}\right)\geq \gamma,\forall k,g,
\end{equation}
\end{subequations}
where $\mathbf{R}_{k,g}=\mathrm{diag}(r_{k,g}^{1,1},r_{k,g}^{1,2},\ldots,r_{k,g}^{k,g})\in\mathbb{R}^{M\times M}$ with $r_{k,g}^{i,n}=\left\|\boldsymbol{\psi}^{k,g}_{\text{u}}-\boldsymbol{\psi}_{\text{p}}^{i,n}\right\|$. Moreover, $\zeta_{k,g}^{i,n}=\rho\lambda_{k,g,i,n}^{u}-\sqrt{\frac{\eta}{N}} e^{-je_{k,g}^{i,n}}$, and vector $\zeta_{k,g}=\left[\zeta_{k,g}^{1,1},\zeta_{k,g}^{1,2},\ldots,\zeta_{k,g}^{K,N}\right]^{T}\in\mathbb{C}^{M\times1}$. \eqref{eq:multicast-optimization-subproblem-U} is also a convex optimization problem, which can be solved using CVX.

\subsubsection{Subproblem w.r.t. $\mathbf{X}$} Since the optimization variable $\mathbf{X}$ appears only in the augmented
Lagrangian term in~\eqref{eq:multicast-objective-function-AL}, the subproblem for optimizing $\mathbf{X}$ is given by
\begin{subequations}
\label{eq:multicast-optimization-subproblem-AL-X}
\begin{align}
\label{eq:multicast-objective-function-AL-X}
\min_{\mathbf{X}}\mathcal{L}\left(\mathcal{V};\left\{\boldsymbol{\lambda}_{k,g}^u\right\}, \left\{\boldsymbol{\lambda}_{k,g}^e\right\}, \rho\right),
\end{align}
\begin{equation}
\eqref{eq:multicast-position_constraint1},\eqref{eq:multicast-position_constraint2}.
\end{equation}
\end{subequations}
For solving the non-convex problem~\eqref{eq:multicast-optimization-subproblem-AL-X}, we invoke the successive convex approximation (SCA) method for iteratively updating $\mathbf{X}$. Specifically, we first rewrite $\mathcal{L}\left(\mathcal{V};\left\{\boldsymbol{\lambda}_{k,g}^u\right\}, \left\{\boldsymbol{\lambda}_{k,g}^e\right\}, \rho\right)$ as a function of ${x}_{\text{p}}^{i,n}$, i.e.,
\begin{align}
\label{eq:AL_function_x}
&\mathcal{L}\left(\mathcal{V};\left\{\boldsymbol{\lambda}_{k,g}^u\right\}, \left\{\boldsymbol{\lambda}_{k,g}^e\right\}, \rho\right)\nonumber\\&=\sum_{k=1}^K\sum_{g=1}^G\sum_{i=1}^K\sum_{n=1}^N\left(L_{k,g,i,n}^{\text{DC}}\left({x}_{\text{p}}^{i,n}\right)+L_{k,g,i,n}^{\text{NC}}\left({x}_{\text{p}}^{i,n}\right)\right),
\end{align}
where $L_{k,g,i,n}^{\text{DC}}\left({x}_{\text{p}}^{i,n}\right)$ and $L_{k,g,i,n}^{\text{NC}}\left({x}_{\text{p}}^{i,n}\right)$ are given in~\eqref{eq:DC_function} and~\eqref{eq:NC_function}, respectively, with $L_{k,g,i,n}^{\text{CC}}\left(x_{\text{p}}^{i,n}\right)=\Omega_{k,g}^{i,n}r_{k,g}^{i,n}$
and 
\begin{align}
\Omega_{k,g}^{i,n}\triangleq \Re\left\{\left(u_{k,g}^{i,n}\right)^{H}\left(\lambda_{k,g,i,n}^{u}-\frac{\sqrt{\eta} e^{-je_{k,g}^{i,n}}}{\sqrt{N}\rho}\right) \right.\nonumber\\ \left. -\frac{2\pi}{\lambda}\left(\lambda_{k,g,i,n}^{e}+\frac{\lambda}{2\pi}\frac{e_{k,g}^{i,n}}{\rho}\right)\right\} \nonumber.
\end{align}
\begin{figure*}
\begin{equation}
\label{eq:DC_function}
L_{k,g,i,n}^{\text{DC}}\left(x_{\text{p}}^{i,n}\right)=\frac{1}{2\rho}\left(\left(u_{k,g}^{i,n}\right)^{2}+1\right)\left(x_{\text{u}}^{k,g}-x_{\text{p}}^{i,n}\right)^{2}+\frac{1}{2\rho}\left(\frac{2\pi}{\lambda_g}{x}_{\text{p}}^{i,n}\right)^{2}-\frac{2\pi}{\lambda}\left(\lambda_{k,g,i,n}^{e}+\frac{e_{k,g}^{i,n}}{\rho}\right)+L_{k,g,i,n}^{\text{CC}}\left(x_{\text{p}}^{i,n}\right).
\end{equation}
\end{figure*}
\begin{figure*}
\begin{equation}
\label{eq:NC_function}
L_{k,g,i,n}^{\text{NC}}\left(x_{\text{p}}^{i,n}\right)=\left(\frac{2\pi}{\lambda}\right)^2\frac{n_{\text{eff}}}{\rho}x_{\text{p}}^{i,n}\sqrt{\left(x_{\text{u}}^{k,g}-x_{\text{p}}^{i,n}\right)^2+\left(y_{\text{u}}^{k,g}-y_{\text{p}}^{i}\right)^2+d^2}.
\end{equation}
\end{figure*}
Since $L_{k,g,i,n}^{\text{DC}}\left({x}_{\text{p}}^{i,n}\right)$ is a difference of convex (D.C.) functions, in the $t$-th SCA iteration, we apply the first-order Taylor expansion over $L_{k,g,i,n}^{\text{CC}}\left(x_{\text{p}}^{i,n}\right)$ to obtain its tight upper bound $\widehat{L}^{\mathrm{CC}}_{k,g,i,n}$ as presented in~\eqref{eq:DC_upper},
\begin{figure*}
\begin{align}
\label{eq:DC_upper}
\widehat{L}^{\mathrm{CC}}_{k,g,i,n}\left(x_{\text{p}}^{i,n}\right)=L_{k,g,i,n}^{\text{CC}}\left(\left(x_{\text{p}}^{i,n}\right)^{\left(t\right)}\right)+\nabla_{x_{\text{p}}^{i,n}}L^{\mathrm{CC}}_{k,g,i,n}\left(\left(x_{\text{p}}^{i,n}\right)^{\left(t\right)}\right)\left(x_{\text{p}}^{i,n}-\left(x_{\text{p}}^{i,n}\right)^{\left(t\right)}\right).
\end{align}
\end{figure*}
where $\nabla_{x_{\text{p}}^{i,n}}L^{\mathrm{CC}}_{k,g,i,n}$ is the derivative of $L^{\mathrm{CC}}_{k,g,i,n}$ w.r.t. $x_{\text{p}}^{i,n}$. 
Then, we can obtain the convex upper bound of {$L_{k,g,i,n}^{\text{DC}}\left({x}_{\text{p}}^{i,n}\right)$, denoted by $\widehat{L}_{k,g,i,n}^{\text{DC}}\left({x}_{\text{p}}^{i,n}\right)$}, by substituting $L_{k,g,i,n}^{\text{CC}}\left(x_{\text{p}}^{i,n}\right)$ with $\widehat{L}_{k,g,i,n}^{\text{CC}}\left(x_{\text{p}}^{i,n}\right)$ in~\eqref{eq:DC_function}. 

For the non-convex component $L_{k,g,i,n}^{\text{NC}}\left(x_{\text{p}}^{i,n}\right)$, we utilize the Jensen’s inequality to obtain the convex surrogate function. Specifically, {the Jensen’s inequality is given by
\begin{align}
\label{eq:Jensen_inequality}
\sqrt{y}\leq\sqrt{y^{(t)}}+\frac{y-y^{(t)}}{2\sqrt{y^{(t)}}}\leq\frac{y+y^{(t)}}{2\sqrt{y^{(t)}}}. 
\end{align}
Substituting $y=x_{\text{p}}^{i,n}\sqrt{\left(x_{\text{u}}^{k,g}-x_{\text{p}}^{i,n}\right)^2+\left(y_{\text{u}}^{k,g}-y_{\text{p}}^{i}\right)^2+d^2}$ into~\eqref{eq:Jensen_inequality}, the upper bound of $L_{k,g,i,n}^{\text{NC}}\left(x_{\text{p}}^{i,n}\right)$ is given in~\eqref{eq:NC_upper},
\begin{figure*}
    \begin{equation}
\label{eq:NC_upper}
\widehat{L}_{k,g,i,n}^{\text{NC}}\left(x_{\text{p}}^{i,n}\right)=\left(\frac{2\pi}{\lambda}\right)^2\frac{n_{\text{eff}}}{\rho}\frac{\left(x_{\text{p}}^{i,n}\right)^3+q_{k,g}^{i,n}x_{\text{p}}^{i,n}+2x_{\text{u}}^{k,g}\left(\left(x_{\text{p}}^{i,n}\right)^{\left(t\right)}\right)^2}{2\sqrt{\left(\left(x_{\text{p}}^{i,n}\right)^{\left(t\right)}-x_{\text{u}}^{k,g}\right)^2+\left(y_{\text{p}}^{i}-y_{\text{u}}^{k,g}\right)^2+d^2}}.
\end{equation}
\hrulefill
\end{figure*}
shown at the top of the next page, where $q_{k,g}^{i,n}$ is fixed at each iteration, defined as
\begin{align}
q_{k,g}^{i,n}&=\left(x_{\text{u}}^{k,g}\right)^{2}+2\left(y_{\text{u}}^{k,g}-y_{\text{p}}^{i}\right)^2+2d^2\nonumber\\&+\left(\left(x_{\text{p}}^{i,n}\right)^{\left(t\right)}-x_{\text{u}}^{k,g}\right)^{2}-4x_{\text{u}}^{k,g}\left(x_{\text{p}}^{i,n}\right)^{\left(t\right)}
\end{align}
} 

With the above approximations, the optimization problem~\eqref{eq:multicast-optimization-subproblem-AL-X} can be reformulated as the following convex one in the $t$-th SCA iteration:
\begin{subequations}
\label{eq:multicast-optimization-subproblem-X}
\begin{align}
\label{eq:multicast-objective-function-X}
\min_{\mathbf{X}}\sum_{k=1}^K\sum_{g=1}^G\sum_{i=1}^K\sum_{n=1}^N\left(\widehat{L}^{\mathrm{DC}}_{k,g,i,n}\left(x_{\text{p}}^{i,n}\right)+\widehat{L}^{\mathrm{NC}}_{k,g,i,n}\left(x_{\text{p}}^{i,n}\right)\right),
\end{align}
\begin{equation}
\eqref{eq:multicast-position_constraint1},\eqref{eq:multicast-position_constraint2},
\end{equation}
\end{subequations}
which can be solved with the CVX\textcolor{blue}{\footnote{\textcolor{blue}{Due to the multimodal property of problem~\eqref{eq:multicast-optimization-subproblem-AL-X}, we perform multiple initializations of PAs positions and select the one yielding the best performance.}}}.

\begin{algorithm}[h]
\renewcommand{\algorithmicrequire}{\textbf{Input:}}
\renewcommand{\algorithmicensure}{\textbf{Output:}}
\caption{PDD Algorithm for Solving Problem~\eqref{eq:multicast-optimization-problem-WM2}}
\label{alg:PDD-WM}
    \begin{algorithmic}[1]
    \State Initialize optimization variables $\mathbf{X}$, $\mathbf{W}$, $\mathbf{U}$, $\mathbf{E}$, $\mu$, and $\gamma$. 
    \State Initialize penalty parameter $\rho^{\left(0\right)}$, dual variables $\left\{\boldsymbol{\lambda}_{k,g}^u\right\}^{\left(0\right)}, \left\{\boldsymbol{\lambda}_{k,g}^e\right\}^{\left(0\right)}$.
    \State Initialize $l=0$.
    \While{the maximum constraint residual, i.e., $\max\left\{\left\|\mathbf{A}^{(l)}\right\|_\infty,\left\|\mathbf{B}^{(l)}\right\|_\infty\right\}$, is greater than the threshold $\epsilon_1$}
    \While{the fractional increment of the objective function is greater than the threshold $\epsilon_2$}
    \State Update $\mu_{k,g}$ with~\eqref{eq:optimal_mu}, $\forall k,g$.
    \State Update $\left\{\gamma,\mathbf{W}\right\}$ by solving~\eqref{eq:multicast-optimization-subproblem-WQ}.
    \State Update $\left\{\gamma,\mathbf{U}\right\}$ by solving~\eqref{eq:multicast-optimization-subproblem-U}.
    \State Update $\textbf{X}$ by solving~\eqref{eq:multicast-optimization-subproblem-X}.
    \State Update $\mathbf{E}$ with~\eqref{eq:close-theta}.
    \EndWhile
\If{$\left\|\mathbf{A}^{(l)}\left(\mathbf{B}^{(l)}\right)\right\|_{\infty}\leqslant0.9\left\|\mathbf{A}^{(l)}\left(\mathbf{B}^{(l)}\right)\right\|_{\infty}$}
    \State Update $\boldsymbol{\lambda}^{(l+1)}=\ \boldsymbol{\lambda}^{(l)}+\frac{1}{\rho^{(l)}}\mathbf{A}^{(l)}\left(\mathbf{B}^{(l)}\right)$
    \Else 
    \State Update $\rho^{(l+1)}=\rho^{(l)}\times0.85$.
    \EndIf
    \EndWhile
    \Ensure $\mathbf{X},\mathbf{W}$.
    \end{algorithmic}
\end{algorithm}

\subsubsection{Subproblem w.r.t. $\mathbf{E}$}
The subproblem for optimizing $\mathbf{E}$ is given by

{
\begin{align}
\label{eq:multicast-objective-function-AL-E}
\min_{\mathbf{E}}\sum_{k=1}^K\sum_{g=1}^G\sum_{i=1}^K\sum_{n=1}^NL_{k,g,i,n}^{\text{CV}}\left(e_{k,g}^{i,n}\right)+L_{k,g,i,n}^{\text{EXP}}\left(e_{k,g}^{i,n}\right),
\end{align}
}
{where $L_{k,g,i,n}^{\text{CV}}\left(e_{k,g}^{i,n}\right)$ and $L_{k,g,i,n}^{\text{EXP}}\left(e_{k,g}^{i,n}\right)$ are given by
\begin{align}
\label{eq:function_CV}
L_{k,g,i,n}^{\text{CV}}\left(e_{k,g}^{i,n}\right)=\frac{1}{2\rho}\left(e_{k,g}^{i,n}-\left(\frac{2\pi}{\lambda}r_{k,g}^{i,n}  \right. \right.\nonumber\\
\left. \left.+  \frac{2\pi}{\lambda_g}{x}_{\text{p}}^{i,n}\right)+\rho\lambda_{k,g,i,n}^e\right)^2,
\end{align}
and
\begin{align}
\label{eq:function_EXP}
L_{k,g,i,n}^{\text{EXP}}\left(e_{k,g}^{i,n}\right)=- \sqrt{\frac{\eta}{N}} \Re\left\{\left(\lambda_{k,g,i,n}^{u}+\frac{u_{k,g}^{i,n}}{\rho}r_{k,g}^{i,n}\right)e^{je_{k,g}^{i,n}}\right\},
\end{align}
respectively. To address the non-convexity of $L_{k,g,i,n}^{\text{EXP}}\left(e_{k,g}^{i,n}\right)$, we utilize the SCA method again for iteratively updating $\mathbf{E}$.} Specifically, by utilizing the Lipschitz gradient surrogate, the convex upper bound of $L_{k,g,i,n}^{\text{EXP}}$ is given in~\eqref{eq:EXP_upper},
\begin{figure*}[t]
\begin{equation}
\label{eq:EXP_upper}
\widehat{L}_{k,g,i,n}^{\mathrm{EXP}}\left(e_{k,g}^{i,n}\right)=L_{k,g,i,n}^{\mathrm{EXP}}\left(\left(e_{k,g}^{i,n}\right)^{(t)}\right)+\nabla_{e_{k,g}^{i,n}}L_{k,g,i,n}^{\mathrm{EXP}}\left(\left(e_{k,g}^{i,n}\right)^{(t)}\right)\left(e_{k,g}^{i,n}-\left(e_{k,g}^{i,n}\right)^{(t)}\right)+\left|\frac{\varrho_{k,g}^{i,n}}{2}\left(e_{k,g}^{i,n}-\left(e_{k,g}^{i,n}\right)^{(t)}\right)\right|^{2}.
\end{equation}
\end{figure*}
shown at the top of the next page, where the corresponding Lipschitz constant can be given by
\begin{align}
\varrho_{k,g,i,n}^e=\sqrt{\frac{\eta}{N}}\left|\lambda_{k,g,i,n}^u+\frac{u_{k,g}^{i,n}}{\rho}r_{k,g}^{i,n}\right|.
\end{align}
The derivation for $\varrho_{k,g,i,n}^e$ is omitted here for the sake of brevity. The reader is referred to~\cite{xu2013block} for more details. By substituting $L_{k,g,i,n}^{\text{EXP}}\left(e_{k,g}^{i,n}\right)$ with $\widehat{L}_{k,g,i,n}^{\mathrm{EXP}}\left(e_{k,g}^{i,n}\right)$ in~\eqref{eq:multicast-objective-function-AL-E}, we can obtain the closed-form solution of $e_{k,g}^{i,n}$ in the $t$-th SCA iteration as given in~\eqref{eq:close-theta}, shown at the top of the next page.
\begin{figure*}
\begin{align}
\label{eq:close-theta}
e_{k,g}^{i,n}=\frac{\varrho_{k,g,i,n}^{e}\left(e_{k,g}^{i,n}\right)^{(t)}-\lambda_{k,g,i,n}^{e}+\frac{1}{\rho}\left(\frac{2\pi}{\lambda}r_{k,g}^{i,n} + \frac{2\pi}{\lambda_g}{x}_{\text{p}}^{i,n}\right)-\nabla_{e}L^{\mathrm{EXP}}\left(\left(e_{k,g}^{i,n}\right)^{(t)}\right)}{\varrho_{k,g,i,n}^{e}+\frac{1}{\rho}}.
\end{align}
\hrulefill
\end{figure*}

\subsubsection{Overall PDD algorithm and property analysis}
The overall PDD algorithm for solving~\eqref{eq:multicast-optimization-problem-WM2} consists of a nested loop, as shown in \textbf{Algorithm~\ref{alg:PDD-WM}}. The penalty factor $\rho$ and dual variables $\boldsymbol{\lambda}_{k,g}^u, \boldsymbol{\lambda}_{k,g}^e$ are updated in the outer loop, while the four subproblems are iteratively updated in the inner loop. The convergence of \textbf{Algorithm~\ref{alg:PDD-WM}} is analyzed as follows. For the inner loop, since the minimum data rate is upper bounded and guaranteed to increase after solving each subproblem, the inner loop can converge within limited iterations. The outer loop is guaranteed to converge, as the AL function $\mathcal{L}\left(\mathcal{V};\left\{\boldsymbol{\lambda}_{k,g}^u\right\}, \left\{\boldsymbol{\lambda}_{k,g}^e\right\}, \rho\right)$ is non-increasing in the procedure for solving the subproblems.

We now present the complexity analysis of \textbf{Algorithm~\ref{alg:PDD-WM}}. The complexity for updating $\mu_{k,g}$ is given by $\mathcal{O}(K^2G)$. Moreover, if
the interior point method is employed, the computational complexity for solving problem~\eqref{eq:multicast-optimization-subproblem-WQ},~\eqref{eq:multicast-optimization-subproblem-U} and~\eqref{eq:multicast-optimization-subproblem-X} are $\mathcal{O}(K^3)$, $\mathcal{O}\left(\left(KGM\right)^{1.5}\right)$ and $\mathcal{O}\left(\left(KGM\right)^{1.5}\right)$, respectively~\cite{dinh2010local}. For updating $\mathbf{E}$ with~\eqref{eq:close-theta}, the complexity is $\mathcal{O}\left(KGM\right)$. As a result, the total computational complexity of \textbf{Algorithm~\ref{alg:PDD-WM}} is given by $\mathcal{O}\left(T_{\text{out}}T_{\text{inn}}\left(KGM\right)^{1.5}\right)$, where $T_{\text{out}}$ and $T_{\text{inn}}$ denote the number of outer and inner iterations required for convergence, respectively.

\subsection{Proposed Solution for WD}
Compared to WM, for WD, the baseband processing is simplified to the power allocation over waveguides. Similar to the proof in \textbf{Lemma~\ref{lemma:SINR-transform}}, problem~\eqref{eq:multicast-optimization-problem-2} for WD is equivalent to the following one:
\begin{subequations}
\label{eq:multicast-optimization-problem-WD}
\begin{equation}
\label{eq:multicast-objective-function-WD}
\max_{\left\{\mathbf{X},\mathbf{p},\gamma\right\}}\gamma,
\end{equation}
\begin{equation}
\label{eq:multicast_rate_constraint-WD}
{\rm{s.t.}} \ \ y\left(\mu_{k,g},\mathbf{X},\mathbf{p}\right)\geq \gamma, \forall k\in\mathcal{K}_{\text{MC}}, \forall g\in\mathcal{G}_{\text{k}},
\end{equation}
\begin{equation}
\label{eq:multicast_power_constraint-WD}
\sum_{k=1}^Kp_k\leq P_{\text{max}},
\end{equation}
\begin{equation}
\eqref{eq:multicast-position_constraint1}, \eqref{eq:multicast-position_constraint2}.
\end{equation}
\end{subequations}
where
\begin{align}
y&(\mu_{k,g},\mathbf{X},\mathbf{p})=2\Re\left\{\mu_{k,g}^*\sqrt{{p_k}}\mathbf{h}_{k,g}^{k}\left(\mathbf{x}_{\text{p}}^{k}\right)\mathbf{g}\left(\mathbf{x}_{\text{p}}^{k}\right)\right\}\nonumber\\&-|\mu_{k,g}|^2\left(\sum\limits_{k'\neq k}^K  p_{k'}\left|\mathbf{h}_{k,g}^{k'}\left(\mathbf{x}_{\text{p}}^{k'}\right)\mathbf{g}\left(\mathbf{x}_{\text{p}}^{k'}\right)\right|^2+\sigma_k^2\right),
\label{eq:y-mu-WD}
\end{align}
and 
\begin{align}
\label{eq:optimal_mu_WD}
\mu_{k,g}=\frac{\sqrt{{p_k}}\mathbf{h}_{k,g}^{k}\left(\mathbf{x}_{\text{p}}^{k}\right)\mathbf{g}\left(\mathbf{x}_{\text{p}}^{k}\right)}{\sum\limits_{k'\neq k}^K  p_{k'}\left|\mathbf{h}_{k,g}^{k'}\left(\mathbf{x}_{\text{p}}^{k'}\right)\mathbf{g}\left(\mathbf{x}_{\text{p}}^{k'}\right)\right|^2+\sigma_k^2}.
\end{align}
For solving problem~\eqref{eq:multicast-optimization-problem-WD}, the PDD framework can be applied again. The only change from \textbf{Algorithm~\ref{alg:PDD-WM}} is the optimization of $\mathbf{p}$ to replace that of $\mathbf{W}$. Specifically, the subproblem w.r.t. $\left\{\gamma, \mathbf{p}\right\}$ can be formulated as
\begin{subequations}
\label{eq:multicast-optimization-subproblem-WD-p}
\begin{align}
\label{eq:multicast-objective-function-WD-p}
\max_{\left\{\gamma,\mathbf{p}\right\}}\gamma,
\end{align}
\begin{equation}
\label{eq:multicast_rate_constraint-WD-p}
{\rm{s.t.}} \ \ y\left(\mu_{k,g},\mathbf{p}\right)\geq \gamma,\forall k,g,
\end{equation}
\begin{equation}
\eqref{eq:multicast_power_constraint-WD}.
\end{equation}
\end{subequations}
Problem~\eqref{eq:multicast-optimization-subproblem-WD-p} is a convex optimization problem, which can be effectively solved with CVX. By iteratively optimizing the subproblems w.r.t. $\left\{\gamma, \mathbf{p}\right\}$, $\left\{\gamma, \mathbf{U}\right\}$, $\mathbf{X}$, and $\mathbf{E}$ in the inner loop and updating the penalty terms in the outer loop in a similar manner as previously described, the MMF problem for WD can be effectively solved under the PDD framework.

\subsection{Proposed Solution for WS}
\subsubsection{Extended PDD algorithm for the multicast case}
For WS, the optimization problem~\eqref{eq:multicast-optimization-problem-2} can be addressed in two stages. We first jointly optimize $\mathbf{w}_k$ and $\mathbf{X}_k, \forall k$, which is followed by the optimization of $\boldsymbol{\lambda}$. Since the optimization of $\mathbf{w}_k$ and $\mathbf{X}_k$ can be decoupled for each group $k$ with the aim of maximizing the minimum data rate, we can obtain the following subproblem for each group $k$:
\begin{subequations}
\label{eq:multicast-optimization-problem-TS}
\begin{align}
\label{eq:multicast-objective-function-AL-TS}
\max_{\left\{\mathbf{X}_k,\mathbf{w}_k,\gamma_k\right\}}\gamma_k,
\end{align}
\begin{equation}
\label{eq:multicast_rate_constraint-AL-TS}
{\rm{s.t.}} \ \ \frac{\left|\mathbf{h}_{k,g}\left(\mathbf{X}_k\right)\mathbf{G}\left(\mathbf{X}_k\right)\mathbf{w}_k\right|^2}{\sigma_k^2}\geq \gamma_k, \forall g\in\mathcal{G}_{\text{k}},
\end{equation}
\begin{equation}
\label{eq:multicast_power_constraint-TS}
\left\|\mathbf{w}_k\right\|^2\leq P_{\text{max}},
\end{equation}
\begin{equation}
\eqref{eq:multicast-position_constraint1}, \eqref{eq:multicast-position_constraint2}.
\end{equation}
\end{subequations}
For solving problem~\eqref{eq:multicast-optimization-problem-TS}, the PDD framework can be employed again by involving the inner and outer loops similar to that in \textbf{Algorithm~\ref{alg:PDD-WM}}. Then, the stationary solutions of $\mathbf{w}_k$ and $\mathbf{X}_k$ can be obtained. For the sake of clarity, the algorithm details are omitted here.  

With the obtained $\mathbf{w}_k$ and $\mathbf{X}_k, \forall k$, problem~\eqref{eq:multicast-optimization-problem-2} for WS is reduced to the following subproblem:
\begin{subequations}
    \label{eq:time-allocation-problem}
    \begin{equation}
        \label{eq:time-allocation-objective}
        \max_{\boldsymbol{\lambda},\xi}\xi,
    \end{equation}
    \begin{equation}
        \text{s.t.}\ \ \label{eq:time-allocation-constraint}\lambda_k\log_2\left(1+\frac{\left|\mathbf{h}_{k,g}\left(\mathbf{X}_k\right)\mathbf{G}\left(\mathbf{X}_k\right)\mathbf{w}_k\right|^2}{\sigma_k^2}\right)\geq \xi, \forall k, g,
    \end{equation}
    \begin{equation}
        \eqref{eq:multicast-time-allocation-constraint}.
    \end{equation}
\end{subequations}
Problem~\eqref{eq:time-allocation-problem} is a convex optimization problem, which can be effectively solved with CVX. 
\subsubsection{Low-complexity algorithm for the unicast case}
The unicast case for WS can be considered as a single-user scenario for each given time slot, where we propose a low-complexity algorithm for solving the formulated MMF problem. Specifically, it is well known that the maximum-ratio transmission (MRT) beamformer achieves the optimal baseband beamforming performance in the single-user case. Specifically, for any given time slot for serving user $k$, the baseband MRT beamformer with given $\mathbf{X}_k$ is given by 
\begin{equation}
    \label{eq:MRT-beamformer}\mathbf{w}^*_k=\sqrt{P_{\text{max}}}\frac{\left(\mathbf{h}_{k}\left(\mathbf{X}_k\right)\mathbf{G}\left(\mathbf{X}_k\right)\right)^H}{\left\|\mathbf{h}_{k}\left(\mathbf{X}_k\right)\mathbf{G}\left(\mathbf{X}_k\right)\right\|}.
\end{equation}

Since no inter-user interference exists, the objective for designing $\mathbf{X}_k$ is to minimize the large-scale path loss and constructively combine the received signals from $N$ PAs on each waveguide at user $k$. Notably, as long as the signal phases from PAs on each waveguide are aligned, the phase differences over PAs on different waveguides can be effectively mitigated via the MRT beamformer.  
Hence, for serving user $k$, the MMF problem for optimizing $\mathbf{X}_k$ can be relaxed to the following tractable one:
\begin{subequations}
\label{eq:TS-unicast-optimization-problem}
    \begin{equation}
        \label{eq:TS-unicast-objective-function}\max_{\mathbf{X}_k}\sum\limits_{i=1}^K\sum\limits_{n=1}^N\frac{1}{\left\|\boldsymbol{\psi}_{\text{u}}^k-\boldsymbol{\psi}_{\text{p}}^{i,n}\right\|},
    \end{equation}
    \begin{equation}
    \label{eq:2pi-constraint}
        \phi_{i,k}^n - \phi_{i,k}^{n'}=2j\pi, \forall i\in\left\{1,\dots, K\right\}, n\neq n',
    \end{equation}
    \begin{equation}
        \eqref{eq:multicast-position_constraint1}, \eqref{eq:multicast-position_constraint2},
    \end{equation}
\end{subequations}
where $j$ is an arbitrary integer. The rewritten objective function~\eqref{eq:TS-unicast-objective-function} is to maximize the sum of the reciprocals of the distances from PAs to the user, so as to minimize the path loss. Constraint~\eqref{eq:2pi-constraint} guarantees that signals from all PAs on each waveguide are constructively combined at the user. 

For solving problem~\eqref{eq:TS-unicast-optimization-problem} under the single-waveguide case, i.e., $K=1$, a low-complexity two-stage algorithm was proposed in~\cite{yanqing}. Specifically, PAs positions are firstly optimized to minimize the path loss in a large-scale manner, which is followed by the small-scale refinement for phase alignment. Observing problem~\eqref{eq:TS-unicast-optimization-problem}, PAs positions over each waveguide are explicitly decoupled in both the objective function and the constraints. Hence, the two-stage optimization approach designed in~\cite{yanqing} can be directly applied to each waveguide $i\in\left\{1,\dots,K\right\}$. For more details of the two-stage algorithm, the reader is referred to~\cite[Section III]{yanqing}. Then, the remaining time allocation problem can be solved in the same way as that for problem~\eqref{eq:time-allocation-problem}. The complete algorithm is summarized in \textbf{Algorithm~\ref{alg:TS-unicast}}, for which the complexity is analyzed as follows. For the pinching beamforming, the PA-wise successive position tuning algorithm proposed in~\cite{yanqing} is with the computational complexity of $\mathcal{O}\left(M\right)$. For the baseband beamforming, the computational complexity for calculating the MRT beamformer in~\eqref{eq:MRT-beamformer} is $\mathcal{O}\left(MK\right)$. For solving the time allocation problem, the computational complexity is $\mathcal{O}\left(K^3\right)$. Therefore, the overall computational complexity of \textbf{Algorithm~\ref{alg:TS-unicast}} is $\mathcal{O}\left(MK+K^3\right)$.

\begin{algorithm}[h]
\renewcommand{\algorithmicrequire}{\textbf{Input:}}
\renewcommand{\algorithmicensure}{\textbf{Output:}}
\caption{Proposed Low-Complexity Algorithm for WS-Based Unicast Communications}
\label{alg:TS-unicast}
    \begin{algorithmic}[1]
    \Statex \underline{Pinching beamforming:}
    \State Obtain the PAs positions $\mathbf{X}_k$ for serving each user by solving the relaxed problem~\eqref{eq:TS-unicast-optimization-problem}.
    \Statex \underline{Baseband beamforming:}
    \State For given $\mathbf{X}_k$, obtain the baseband MRT beamformer $\mathbf{w}_k$ for each user based on~\eqref{eq:MRT-beamformer}.
    \Statex \underline{Time allocation:}
    \State For given $\left\{\mathbf{X}_k, \mathbf{w}_k\right\}$, obtain the optimal $\boldsymbol{\lambda}$.
    \end{algorithmic}
\end{algorithm}

\textcolor{blue}{
\subsection{Practical Overheads Discussion}
Although the above solutions have provided efficient approaches for joint baseband processing and pinching beamforming in PASS, it is worthy noting that, the practical implementation of multi-waveguide PASS also needs to consider the following practical issues. Firstly, the fabrication imperfection may cause PAs positional errors, which can affect the pinching beamforming performance. Therefore, PAs are expected to be equipped with high-precision position sensors and control circuitries for more efficient pinching beamforming. Secondly, PAs activation latency needs to be restricted within the channel coherence time. If the communication environment undergoes rapid changes, more efficient PAs activation mechanism needs to be designed. Furthermore, since WS imposes stringent time synchronization across users, the time-synchronization budget for WS needs to take the PAs activation delay into account for avoiding slot collisions.
}

\section{Simulation Results}
In this section, numerical results are provided to verify the effectiveness of the proposed multi-user downlink PASS under the proposed three transmission structures, for both unicast and multicast scenarios. 

\begin{figure}[t]
	\centering
	\includegraphics[scale=0.32]{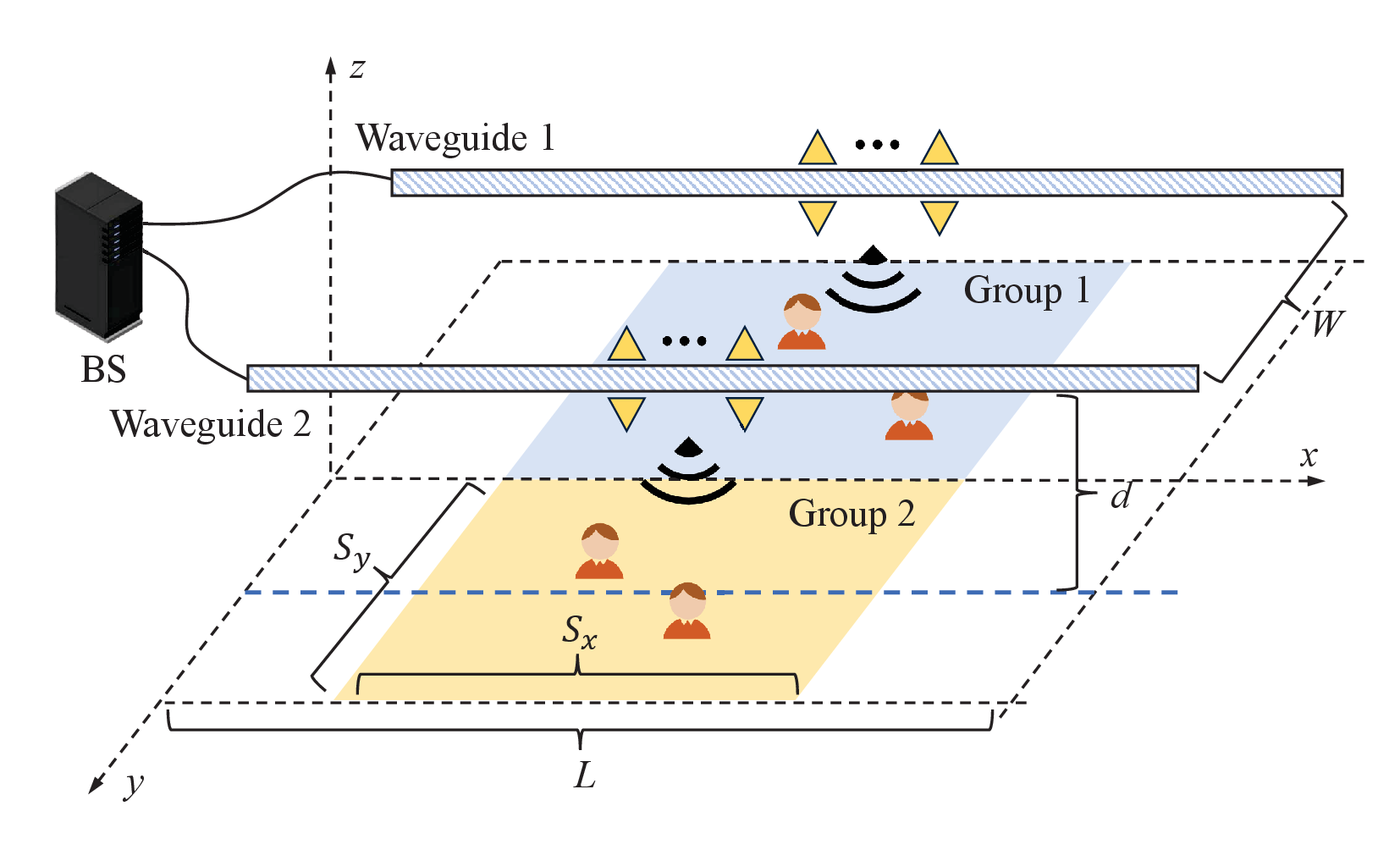}
	\caption{Simulation setup.}
	\label{fig:system_model} 
\end{figure}

\subsection{Simulation Setup}
Unless otherwise specified, we set the carrier frequency, the effective refractive index of a waveguide, the BS maximum transmit power, and the user noise power as $f_\text{c}=28$~GHz, $n_{\text{eff}}=1.4$, $P_{\text{max}}=\left\{0,5,10,15,20\right\}$~dBm, and $\sigma^2_k=-90$~dBm, respectively. 
The number of waveguides and unicast users/multicast groups is set as $K=2$, and the number of users in each group is set as $G=2$ for the multicast case. Each waveguide is connected with a dedicated RF chain and equipped with $N=\left\{4,6,8,10,12\right\}$ PAs. Regarding the geometrical configuration, consider that all waveguide with the length $L=10$~m are placed at the height of $d=3$~m in parallel, with equal distance of $W=\left\{5,10,15,20,25,30,35,40\right\}$~m, as shown in Fig.~\ref{fig:system_model}. The coordinates of the $n$-th PA on two waveguides are $\left(x_{\text{p}}^{1,n}, -\frac{W}{2}, 3\right)$ and $\left(x_{\text{p}}^{2,n}, \frac{W}{2}, 3\right)$, respectively. The minimum spacing among PAs is set as $\lambda/2$, with $\lambda=c/f_{\text{c}}$.
Define two rectangular areas $\mathcal{S}_1=\left\{[x,y,0]\left|x\in\mathcal{C}_x, y\in\left(-\frac{W+S_y}{2},-\frac{W-S_y}{2}\right)\right.\right\}$, $\mathcal{S}_2=\left\{[x,y,0]\left|x\in\mathcal{C}_x, y\in\left(\frac{W-S_y}{2},\frac{W+S_y}{2}\right)\right.\right\}$, with $\mathcal{C}_x=\left(5-\frac{S_x}{2},5+\frac{S_x}{2}\right)$, $S_x=\left\{2,4,6,8,10\right\}$, and $S_y=5$~m. For unicast communications, the two users are randomly located in the the two rectangles $\mathcal{S}_1$, $\mathcal{S}_2$, respectively. For multicast communications, the users in each multicast group are randomly located in $\mathcal{S}_1$, $\mathcal{S}_2$, respectively. Moreover, regarding parameters settings in the proposed \textbf{Algorithm~\ref{alg:PDD-WM}}, {the residual tolerance threshold and the convergence threshold for the inner loop are set as $\epsilon_1=10^{-6}$ and $\epsilon_2=10^{-3}$, respectively. The penalty factor and the dual variables are initialized as $\rho^{\left(0\right)}=10^{-3}$ and $\boldsymbol{\lambda}^{(0)}=\mathbf{0}$, respectively.}


\subsection{Baseline Schemes}
For performance comparison, we consider the conventional MIMO system with fully-digital and hybrid beamforming structures as the baseline schemes. Notably, for the sake of fair comparison under the same hardware cost, the number of RF chains for the baselines is set the same as that for the considered PASS.
\begin{itemize}
    \item \textbf{Conventional MIMO (Fully-digital beamforming)}: In this case, the centroid of a uniform linear array along the $x$-axis is positioned at $(5,0,3)$~m. The array consists of $K$ antennas spaced with $\lambda/2$. Each antenna is connected to a dedicated RF chain, i.e., the fully digital beamforming is enabled at the BS. Thus, for the unicast case, the signal received by the $k$-th user is given by $\overline{y}_k=\overline{\mathbf{h}}_k^H\sum_{i=1}^K\mathbf{w}_is_i+n_k$,
    where $\overline{\mathbf{h}}_k\in\mathbb{C}^{K\times1}$ denotes the channel vector from the antenna array to the user $k$. The similar signal expression can be obtained for the multicast case, which is omitted here for brevity. The resultant MMF problems can be solved directly using the CVX tool after introducing the auxiliary variables $\gamma$ and $\mu$ as done in Section IV-A.
    \item \textbf{Conventional MIMO (Hybrid beamforming)}: In this case, a uniform linear array is centered at $(5,0,3)$~m, where $K$ RF chains are connected to $M$ antennas in a sub-connected structure. Again, take the unicast case as an example, the signal received by the $k$-th user is given by  $\overline{\overline{y}}_k={\overline{\mathbf{h}}}_k^H\mathbf{W}_{\mathrm{RF}}\sum_{i=1}^K\mathbf{w}_is_i+n_k,$
    where $\mathbf{W}_{\mathrm{RF}}$ is the analog beamforming matrix. For solving the resultant MMF problems, we invoke the idea of minimizing the distance between the hybrid beamformer and the optimal fully-digital beamformer. For more details, the reader is referred to~\cite{10587118}.
\end{itemize}
\subsection{Convergence of the Proposed Algorithms}

\begin{figure}[t]
	\centering
	\includegraphics[scale=0.6]{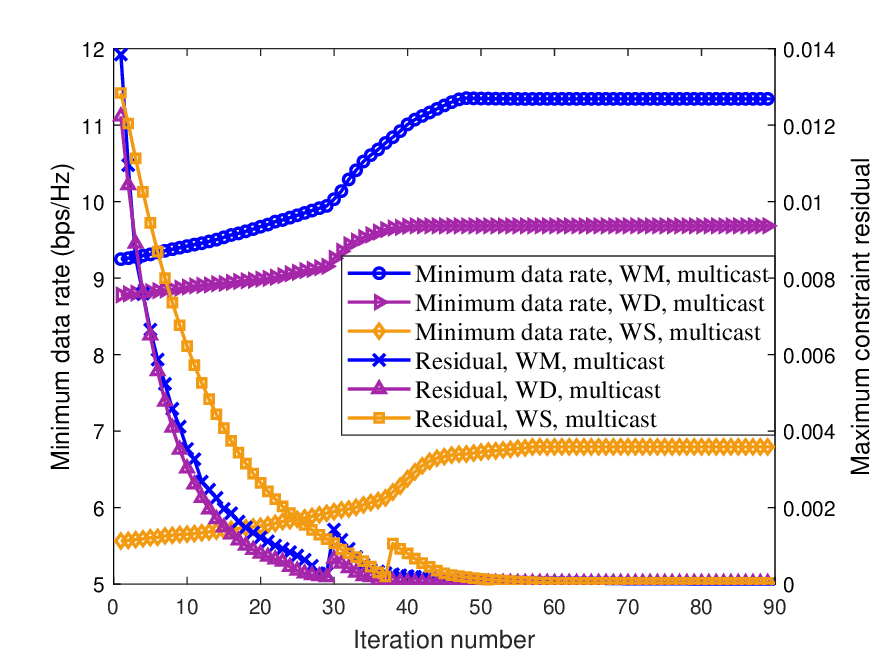}
	\caption{Convergence of the proposed PDD algorithms, with $N=8$, $W=5$~m, $S_x=6$~m, and $P_{\text{max}}=20$~dBm.}
	\label{fig:algorithm-convergence} 
\end{figure}

Fig.~\ref{fig:algorithm-convergence} demonstrates the convergence behavior of the proposed \textbf{Algorithm~\ref{alg:PDD-WM}} for WM, WD, and WS under the multicast case, where we set $N=8$, $W=5$~m, $S_x=6$~m, and $P_{\text{max}}=20$~dBm. For demonstrating the satisfaction of the equality constraints in~\eqref{eq:equality_constraint}, we also plot the maximum constraint residual, i.e., $\max\left\{\left\|\mathbf{A}^{(t)}\right\|_\infty,\left\|\mathbf{B}^{(t)}\right\|_\infty\right\}$.
The results provided are obtained from a single random realization of the users distribution. It can be observed that, the minimum data rate increases slowly at first, which is due to the fact that the penalty of constraint residuals dominates the objective functions. With the decrement of the constraint residual, the minimum data rate increases rapidly after around $30$ iterations. Moreover, for WM, WD and WS, the algorithms converge within around $50$, $40$ and $60$ iterations, respectively, when the maximum constraint residual reduces to around $0$. This implies that, the equality constraints, i.e.,~\eqref{eq:u-equality} and~\eqref{eq:e-equality}, are ultimately satisfied at convergence, which demonstrates the effectiveness of the proposed PDD algorithms. WS demonstrates the slowest convergence speed, given that more optimization variables (i.e., PAs positions tailored for each user) are involved. WD shows the most rapid convergence speed, which is attributed to the fact that the power allocation is generally more tractable and easier to handle than the baseband beamforming.


\begin{figure}
    \centering
    \begin{subfigure}{\linewidth}
        \centering
        \includegraphics[scale=0.6]{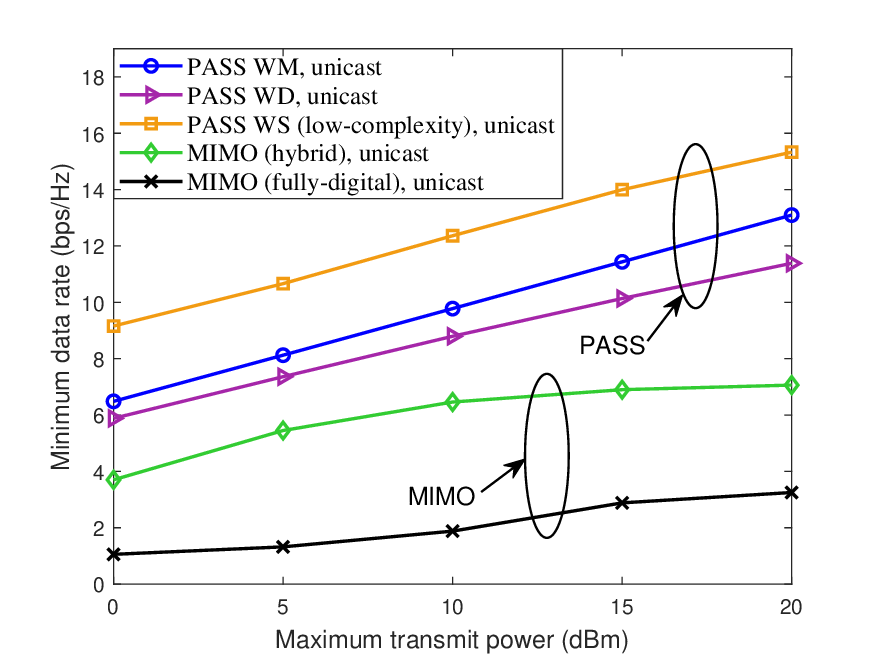}
        \caption{Unicast communications.}
        \label{fig:transmit-power-unicast}
    \end{subfigure}
    \begin{subfigure}{\linewidth}
        \centering
        \includegraphics[scale=0.6]{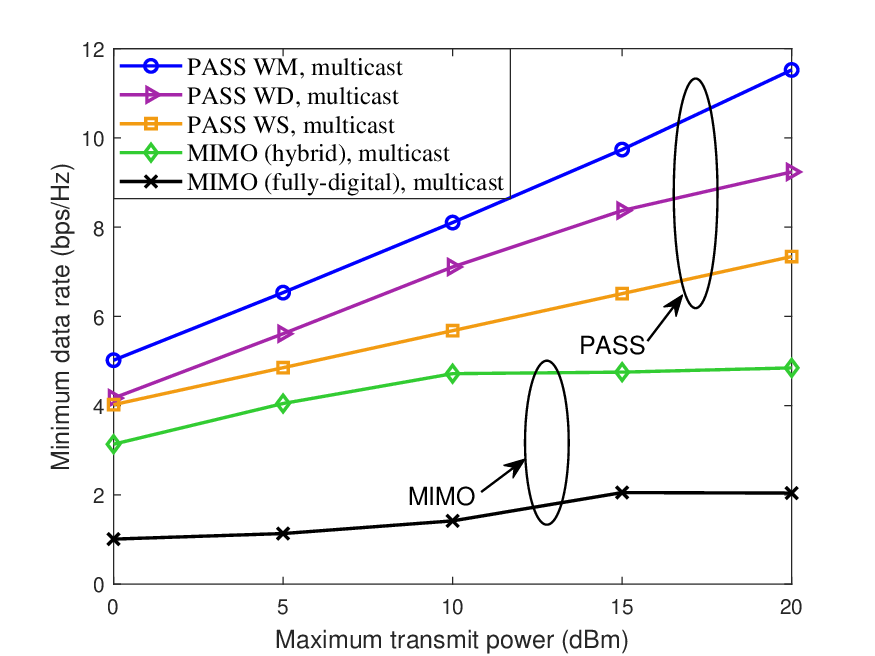}
        \caption{Multicast communications.}
        \label{fig:transmit-power-multicast}
    \end{subfigure}
    \caption{Minimum data rate versus BS maximum transmit power with $N=8$, $W=5$~m, and $S_x=6$~m.}
    \label{fig:transmit-power}
\end{figure}

\subsection{Minimum Data Rate Versus BS Maximum Transmit Power}
Fig. 4(a) and Fig. 4(b) illustrate the minimum data rate versus the BS maximum transmit power $P_{\text{max}}$ for the unicast and multicast cases, respectively, where we set $N=8$, $W=5$~m, and $S_x=6$~m. As can be observed, 
the considered PASS significantly outperforms the conventional MIMO system for both unicast and multicast communications. For instance, considering the WM structure for multicast communication with $P_{\text{max}}$ set as $20$~dBm, the minimum data rate achieved by PASS is around $11.5$~bps/Hz, while that achieved by the conventional MIMO is $2$~bps/Hz and $4.8$~bps/Hz for fully-digital and hybrid beamforming structures, respectively. This is expected, as PASS can significantly reduce the path loss by flexibly moving PAs towards users. It is also worthy noting that the performance enhancement brought by PASS is based on the premise of low-cost deployment, while the hybrid MIMO requires a large number of phase shifters for the analog beamforming. 

\textcolor{blue}{Regarding the performance comparison among the three transmission structures for PASS, as can be observed from Fig. 4(b), WS achieves the best performance for unicast communications, while WM shows its superiority for the multicast counterpart. The reason behind this can be explained as follows. For unicast communications, WS can realize the PAs positions design tailored for each individual user, which brings in significant performance gain compared to WM and WD where all users are served simultaneously.} For multicast communications, since users in the same group are served at the same time, the PAs positions optimization does not work in a user-specific manner any more. Then, WM becomes attractive as the full time resources can be fully leveraged to serve all users. Also note that WM outperforms WD for both unicast and multicast communications, as WM can fully harness the multiplexing gain with the baseband beamforming.   

\begin{figure}[t]
	\centering
	\includegraphics[scale=0.6]{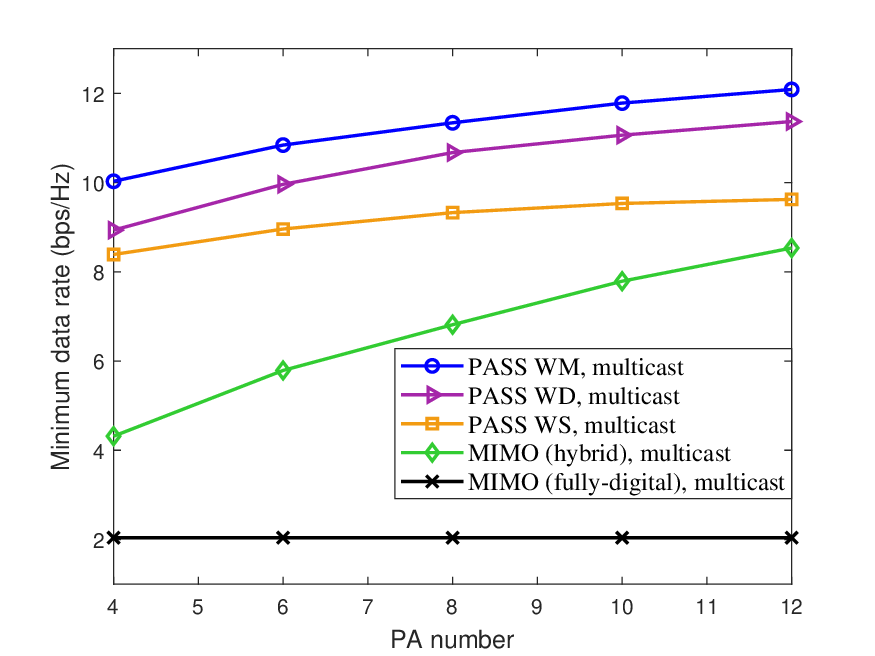}
	\caption{Minimum data rate versus the number of PA with $W=5$~m, $S_x=6$~m, and $P_{\text{max}}=20$~dBm.}
	\label{fig:PA-number} 
\end{figure}

\subsection{Minimum Data Rate Versus Number of PAs}
Fig.~\ref{fig:PA-number} depicts the minimum data rate versus the number of PAs $N$ for multicast communications, where we set $W=5$~m, $S_x=6$~m, and $P_{\text{max}}=20$~dBm. As can be seen from the figure, the minimum data rate increases with the increment of $N$. For example, when the number of PAs on each waveguide increases from $4$ to $12$, the minimum data rate is improved by $20\%$, $27\%$, and $15\%$ for WM, WD, and WS, respectively. This is due to the fact that, a larger $N$ leads to higher spatial DoFs for the pinching beamforming, thus improves the capability of strengthening the desired signal as well as mitigating inter-group interference.

\begin{figure}[t]
	\centering
	\includegraphics[scale=0.6]{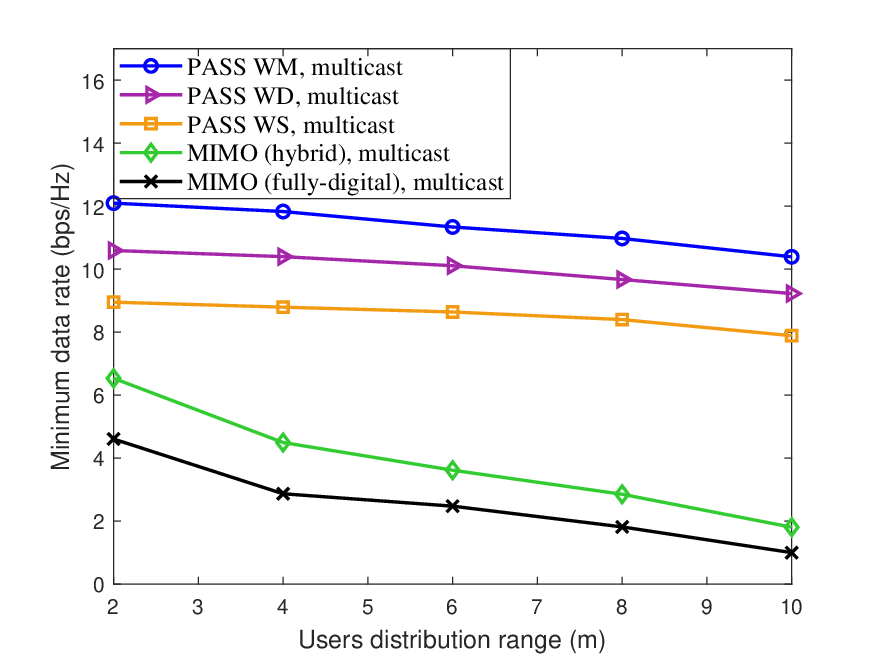}
	\caption{Minimum data rate versus users distribution range with $N=8$, $W=5$~m, and $P_{\text{max}}=20$~dBm.}
	\label{fig:usre-distribution-range} 
\end{figure}

\subsection{Minimum Data Rate Versus Users Distribution Range}
Fig.~\ref{fig:usre-distribution-range} studies the impact of the users distribution range $S_x$ on the minimum data rate for both PASS and conventional MIMO, where we set $N=8$, $W=5$~m, and $P_{\text{max}}=20$~dBm. We notice that the minimum data rate decreases dramatically with enlarged $S_x$ for the conventional MIMO. This is because, on the one hand, as antennas are fixed at the centroid of the considered users distribution area, the expanded range can lead to increased path loss, and thereby reducing the users service quality. On the other hand, when $S_x$ increases, users are more likely to be located at the endward direction of the linear array, which results in weakened array gain compared to that in the perpendicular direction. 
In contrast, the performance of PASS exhibits a slight degradation with the extension of the serving area, given that the flexible adjustment of PAs positions facilitates establishing communication links near to the users. This phenomenon underscores the superiority of PASS for long-distance transmission.

\begin{figure}[t]
	\centering
	\includegraphics[scale=0.6]{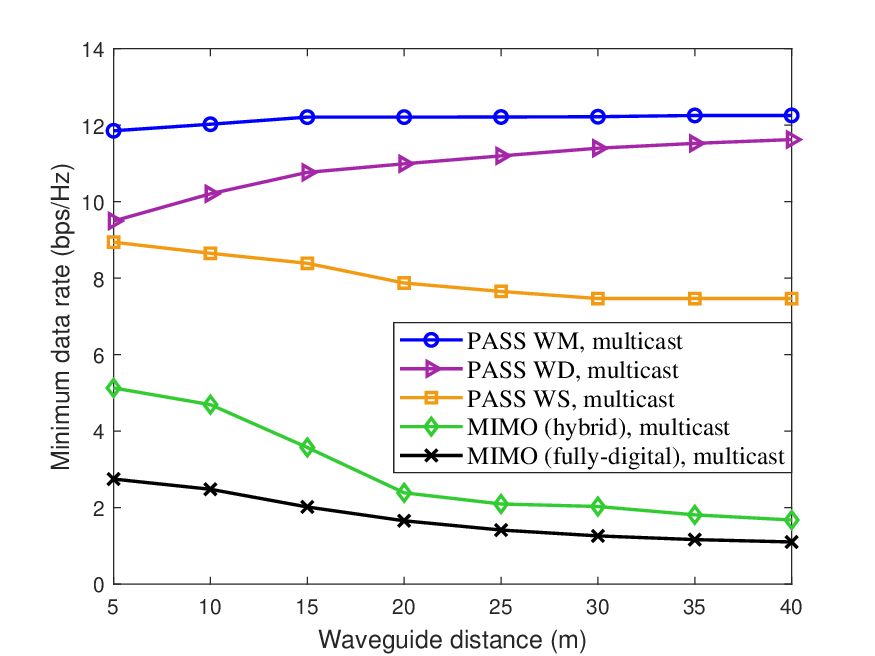}
	\caption{minimum data rate versus waveguide distance with $N=8$, $S_x=6$~m, and $P_{\text{max}}=20$~dBm.}
	\label{fig:waveguide-distance} 
\end{figure}

\subsection{Minimum Data Rate Versus Waveguide Distance}
Fig.~\ref{fig:waveguide-distance} presents the minimum data rate versus the waveguide distance $W$, where we set $N=8$, $S_x=6$~m, and $P_{\text{max}}=20$~dBm. We can first observe that the performance degrades with the larger distance for conventional MIMO system, which is expected due to the increased average path loss and the diminished beamforming gain at the axial direction. Regarding the performance of the three transmission structures for PASS, it is interesting to find that, the achievable minimum data rate of WS decreases with larger $W$, while WD shows improved performance. Meanwhile, the performance of WM remains nearly unchanged. This can be explained as follows. For WS, the signal of each user is conveyed to all waveguides with the baseband beamforming, which results in eroded data rate given the longer average distance from PAs to each user. \textcolor{blue}{For WD, since the signal of each user is only transmitted by PAs on the dedicated waveguide, the increment of $W$ contributes to reduced inter-group interference, and thus the minimum data rate is enhanced. For WM, although increasing $W$ helps mitigating inter-group interference, the pinching beamforming gain is also reduced with more widely dispersed PAs. Thus, WM shows more stable performance when $W$ gets larger. Notably, when $W$ reaches up to $40$~m, WD achieves the similar performance to WM, which implies that the baseband beamforming becomes unappealing due to the geographically separated users distribution (e.g., different waveguides are installed for serving different local areas). These observations suggest that the transmission structure for PASS needs to be carefully selected considering the practical implementation scenario.} 

\section{Conclusions}
In this paper, three practical transmission structures for multi-waveguide PASS were proposed. For each of these transmission structures, the joint baseband processing and pinching beamforming MMF optimization problem was formulated for a general multi-group multicast communication, where the unicast communication could be regarded as a special case. A PDD-based optimization framework was invoked to solve the formulated MMF problems for the three transmission structures. Numerical results demonstrated that PASS can achieve considerable MMF enhancement compared to the conventional fixed-position antenna system. Moreover, WS and WM are preferable for unicast and multicast communications, respectively. It also revealed that the performance disparity between WD and WM is substantially reduced when users were geographically separated. These insights offer useful guidance for the multi-user communications design in PASS. \textcolor{blue}{Based on the proposed transmission structures, future works can be extended to topics such as integrated sensing and communications for PASS, PASS-enabled physical layer security, etc. Due to different characteristics of the three transmission structures, their pros and cons for the application in various communication/sensing scenarios deserve further investigation.} 
\vspace{-0.3cm}
\begin{appendices}
\section{PROOF OF LEMMA 1}
\label{app:lemma-1-proof}
For any given $\mathbf{X}$ and $\mathbf{W}$, the expression $y(\mu_{k,g},\mathbf{X},\mathbf{W})$ can be reduced to $y(\mu_{k,g})$. Since our objective is to maximize the minimum achievable rate among users, $\gamma$ should be as large as possible at
the optimized solutions of problem~\eqref{eq:multicast-optimization-problem-WM2}. Therefore, constraint~\eqref{eq:multicast_rate_constraint-WM2} can be equivalently rewritten as 
\begin{equation}
\gamma \leq y\left(\mu_{k,g}\right)\triangleq y_{\max}(\mu_{g,k}), \forall k\in\mathcal{K}_{\text{MC}}, \forall g\in\mathcal{G}_{\text{k}}.
\end{equation}
It can be observed that $y\left(\mu_{k,g}\right)$ is a concave function with respect to $\mu_{g,k}$. By taking the first derivative of $y\left(\mu_{k,g}\right)$, the optimal $\mu_{g,k}$ is given by~\eqref{eq:optimal_mu}. Substituting~\eqref{eq:optimal_mu} into $y\left(\mu_{k,g}\right)$, the maximal value of $y_{\max}(\mu_{g,k})$ is given by
\begin{equation}
y_{\max}(\mu_{g,k})=\frac{\left|\mathbf{h}_{k,g}\left(\mathbf{X}\right)\mathbf{G}\left(\mathbf{X}\right)\mathbf{w}_k\right|^2}{\sum\limits_{k'\neq k}^K  \left|{\mathbf{h}_{k,g}\left(\mathbf{X}\right)\mathbf{G}\left(\mathbf{X}\right)}\mathbf{w}_{k'}\right|^2+\sigma_k^2},
\end{equation}
which recovers the SINR expression for user $(k,g)$. This completes the proof.

\end{appendices}

\bibliographystyle{IEEEtran}
\bibliography{mybib}
\end{document}